\newtheorem{theorem}{Theorem}
\newtheorem{corollary}[theorem]{Corollary}
\newtheorem{lemma}[theorem]{Lemma}
\newtheorem{proposition}[theorem]{Proposition}
\DeclareMathOperator{\argmin}{argmin}
\DeclareMathOperator{\argmax}{argmax}
\newenvironment{rtheorem}[1]{\medskip\noindent\textbf{Theorem~\ref{#1}.}\begin{itshape}}{\end{itshape}}
\newenvironment{rlemma}[1]{\medskip\noindent\textbf{Lemma~\ref{#1}.}\begin{itshape}}{\end{itshape}}
\newenvironment{rproposition}[1]{\medskip\noindent\textbf{Proposition~\ref{#1}.}\begin{itshape}}{\end{itshape}}
\def\shortcite{\citeyear}
\newcommand{\eps}{\epsilon}
\renewcommand{\vec}[1]{\mathbf{#1}}
\newcommand{\coursename}{(67686) Mathematical Foundations of AI}
\newcommand{\handout}[5]{
   \renewcommand{\thepage}{#1-\arabic{page}}
   \noindent
   \begin{center}
   \framebox{
      \vbox{
    \hbox to 5.78in { {\bf \coursename}
         \hfill #2 }
       \vspace{4mm}
       \hbox to 5.78in { {\Large \hfill #5  \hfill} }
       \vspace{2mm}
       \hbox to 5.78in { {\it #3 \hfill #4} }
      }
   }
   \end{center}
   \vspace*{4mm}
}
\newcommand{\labeq}[2]{
\begin{equation}
\label{eq:#1} #2
\end{equation}}
\def\ol{\overline}
\def\RPoA{\mbox{R-PoA}}
\def\PoA{\mbox{PoA}}
\newenvironment{proof-sketch}{\noindent{\bf Sketch of Proof}\hspace*{1em}}{\qed\bigskip}
\newenvironment{proof-idea}{\noindent{\bf Proof Idea}\hspace*{1em}}{\qed\bigskip}
\newenvironment{proof-of-lemma}[1]{\noindent{\bf Proof of Lemma #1}\hspace*{1em}}{\qed\bigskip}
\newenvironment{proof-attempt}{\noindent{\bf Proof Attempt}\hspace*{1em}}{\qed\bigskip}
\def\fnum@figure{{\bf Figure \thefigure}}
\def\fnum@table{{\bf Table \thetable}}
\long\def\@mycaption#1[#2]#3{\addcontentsline{\csname
  ext@#1\endcsname}{#1}{\protect\numberline{\csname
  the#1\endcsname}{\ignorespaces #2}}\par
  \begingroup
    \@parboxrestore
    \small
    \@makecaption{\csname fnum@#1\endcsname}{\ignorespaces #3}\par
  \endgroup}
\def\mycaption{\refstepcounter\@captype \@dblarg{\@mycaption\@captype}}
\newcommand{\mathify}[1]{\ifmmode{#1}\else\mbox{$#1$}\fi}
\newcommand{\bigO}O
\newcommand\tup[1]{\left\langle #1 \right\rangle}
\renewcommand{\vec}[1]{{\mathbf #1}}
\newcommand{\remove}[1]{{}}
\def\GG{\mathcal{G}}
\newcommand{\xqed}{\mbox{\raggedright $\Diamond$}}
\newcommand{\step}[1]{\stackrel{{\scriptscriptstyle{#1}}}{\rightarrow}}
\newcommand{\newpar}[1]{
\vspace{-0mm}
\paragraph{#1}}
\newcommand{\subsec}[1]{
\vspace{-0mm}
\subsection{#1}}
\newcommand{\omittext}[1]{}
\def\CPoA{\mathrm{C-PoA}}
\definecolor{darkgreen}{rgb}{0,0.6,0}
\newcommand{\kibitz}[2]{\ifnum\Comments=1{\color{#1}{#2}}\fi}
\newcommand{\rmr}[1]{\kibitz{blue}{[RESHEF:#1]}}
\newcommand{\dcp}[1]{\kibitz{red}{[DAVID:#1]}}
\def\ol{\overline}
\def\os{\ol{\vec s}}
\begin{document}

\title{Congestion Games with Distance-Based Strict Uncertainty}
\author{Reshef Meir and David Parkes\thanks{Parkes is supported in this work by the TomKat Charitable Trust.}\\ Harvard University}

\maketitle

\begin{abstract}
We put forward a new model of congestion games where agents have uncertainty over the routes used by other agents. We take a non-probabilistic approach, assuming that each agent knows that the number of agents using an edge is within a certain range. Given this 
uncertainty, we model agents who either minimize their worst-case cost (WCC) 
or their worst-case regret (WCR), and study implications on equilibrium 
existence, convergence through adaptive play, 
and efficiency. 
Under the WCC behavior the game reduces to a modified congestion game,  and 
welfare improves when agents have moderate uncertainty. Under WCR behavior the game is not, in general, a congestion game, but we show convergence and efficiency bounds for a simple class of games.
\end{abstract}

\section{Introduction}

Congestion games~\cite{Ros73} provide a good abstraction for a wide
spectrum of scenarios where self-interested agents contest for
resources, and can be conveniently analyzed using game-theoretic
tools.

Recently, more complex models of congestion games have been
suggested, taking into account the incomplete information
agents may have when making a decision (e.g.~Ashlagi et al.~\shortcite{ashlagi2009two} and Piliouras et al.~\shortcite{piliouras2013risk}; see Related Work). Uncertainty may stem from
multiple sources, including uncertainty about the state of nature-- 
and thus the cost of resources --or about
other agents' actions.
We can imagine commuters choosing routes home from
work and facing uncertainty about road conditions (e.g., weather,
roadworks) as well as about the routes selected by others.
%

Rather than model uncertainty through a distributional model, 
%
%
we adopt a non-probabilistic approach of 
{\em strict uncertainty}. Indeed, 
extensive experimental and empirical studies have demonstrated that people have
difficulties in representing probabilities, and often adopt
other heuristics in place of 
probabilistic reasoning~\cite{tversky1974judgment,slovic1980facts}.
%
With strict uncertainty, each player 
faces a set of 
\emph{possible states}, where the cost of each action depends on the
(unknown) actual state, and decisions take into account risk attitudes and other biases,  often applying heuristics rather than optimization. Such alternative approaches to decision making in general, and to uncertainty in particular, have deep roots in the AI literature, largely due to the works of Herbert Simon on \emph{bounded rationality} and \emph{procedural rationality}~\cite{simon57,simon1987bounded}.
 
Having adopted a non-probabilistic approach, we must make two crucial
modeling decisions. 
First,
we must decide how each agent acts in the face
of strict uncertainty. 
The simplest behavior follows a {\em minimax approach}~\cite{wald1939contributions,simon57}, and assumes that the decision maker is trying to minimize her \emph{worst-case cost} (WCC). 
%
Another approach seeks to minimize the \emph{worst-case regret} (WCR) of the decision maker, which goes back to Savage~\shortcite{savage1951theory}, and has been also applied to games~\cite{hyafil2004regret}. Both cost measures are worst-case approaches, and suitable as an abstraction for the behavior of a rational but risk-averse agent.

Second, we need to determine which states are considered
possible by the agents.
To construct the set of possible states, we
adopt the recent model of \emph{distance-based
uncertainty}~\cite{MLR14}.
%
All agents share the same belief about the current
``reference state" of
the network (i.e., the load on every edge in a routing game), which may be available for example from an
external source such as traffic reports,
or from 
an agent's previous experience. However
agents vary in the accuracy they attribute to the reference state. Each agent $i$ has an uncertainty parameter, $r_i$, which  
 reflects a belief that the 
actual load is within some distance $r_i$ of
the reference load.
A higher $r_i$ may reflect either that an agent is less informed about the true congestion, or, alternatively, that she is more risk-averse. 

From each heuristic (WCC or WCR) we can derive a natural equilibrium concept. Intuitively, every action profile induces a reference state $\vec s$, and we consider the heuristic best response of every agent to the set of possible states around $\vec s$. State $\vec s$ is an \emph{equilibrium} if every agent minimizes her worst case cost (or regret) by keeping her current action.


As a simple example, if in some profile $100$ players are using a resource, then agent $i$ believes  the actual load to be anywhere between $100/r_i$ and $100r_i$. If $r_i=1$ for all agents, we get the standard complete information model as a special case: both minimax cost and minimax regret collapse to simple cost minimization, and  our equilibrium notion coincides with Nash equilibrium.

\paragraph{Our contribution.}

We study equilibrium behavior in {\em nonatomic congestion games}, under
our strict, distance-based model of uncertainty.
For simplicity and concreteness, we focus our presentation on routing games, where resources are edges in a graph, and valid strategies are paths from source to target.\footnote{
Any congestion game is equivalent to a routing game.}
%
%
%
%
%
 With worst-case cost players, we show that the game reduces to a modified,
complete-information routing game with {\em player-specific costs}.  Further, if all agents have the same uncertainty level we get a potential game.

 We are 
interested in how equilibrium quality (measured by the price of
anarchy~\cite{roughgarden2004bounding}) is affected by introducing uncertainty.
For routing games with affine cost functions, we show that the price of anarchy (PoA) under uncertainty decreases gradually from $\frac43$ (without uncertainty) to $1$, and then climbs back up, proportionally to the amount of uncertainty. We also show that in a population of agents with different uncertainty levels, the PoA is bounded by the PoA of the worst possible uncertainty level in symmetric affine games on parallel networks.\footnote{In the AAAI'15 published version, the last result was stated for any affine game, which is incorrect. A counter example appears in \cite{brown2017studies}. We thank Philip Brown for pointing this issue out.} 

With worst-case regret players the induced game is no longer a congestion game. Yet, we show that for a simple class of games a weak potential function exists, and thus equilibrium existence and convergence results are 
available. We give some preliminary results on PoA bounds 
with worst-case regret players.
Due to space constraints most of our proofs are omitted, and are available in 
a separate Appendix.

\section{Preliminaries}


\newpar{Nonatomic routing games.}

Following~Roughgarden~\shortcite{roughgarden2003price} and
Roughgarden and Tardos~\shortcite{roughgarden2004bounding}, a {\em nonatomic routing game}
(NRG) is a tuple $\GG = \tup{G,\vec c, m,\vec u,\vec v,\vec n}$,  where 
\begin{itemize}
	\item $G=(V,E)$ is a directed graph;
	\item $\vec c = (c_e)_{e\in E}$, $c_e(t)\geq 0$ is the cost incurred when $t$ agents use edge $e$;
	\item $m\in \mathbb N$ is the number of agent types;
	\item $\vec u,\vec v\in V^m$, where $(u_i,v_i)$ are the source and target nodes of type $i$ agents;
	\item $\vec n\in \mathbb R_+^m$, where $n_i \in \mathbb R_+$ is the total mass of type~$i$ agents. 
$n=\sum_{i\leq m} n_i$ is the total mass of agents.
\end{itemize}

 We denote by  $A_i \subseteq 2^E$ the set of all directed paths between the pair of nodes $(u_i,v_i)$ in the graph. Thus $A_i$ is the the set of {\em actions} available to agents of type $i$. We denote by $A = \cup_i A_i$ the set of all directed source-target paths. We assume that the costs $c_e$ are non-decreasing, continuous and differentiable.

A NRG is \emph{symmetric} if all agents have the same source and target, i.e., $A_i=A$ for all $i$.
A symmetric NRG is a \emph{resource selection game} (RSG) if $G$ is a graph of parallel links. That is, if $A=E$ and the action of every agent is to select a single resource (edge).

\newpar{Game states.} A \emph{state} (or action profile) is a vector $\vec s \in \mathbb R_+^{|A| \times m}$,
where $s_{f,i}$ is the amount of agents of type $i$ that use path
$f\in A_i$. In a valid state, $\sum_{f\in A_i}s_{f,i} = n_i$. The
total traffic on path $f\in A$ is denoted by $s_f = \sum_{i=1}^m
s_{f,i}$. 
The \emph{load state} $\os\in \mathbb R_+^{|E|}$ is a vector of aggregated edge loads derived from state $\vec s$, where $\ol s_e = \sum_{f:e\in f}s_f$. 
This is the total traffic on edge $e\in E$ via  all paths going through $e$. We also allow load states that \emph{cannot be derived} from a valid state.

%
Note that all the information relevant to the costs in state $\vec s$ is specified in the load state $\os$:
all agents using a particular edge $e$ suffer a cost of $c_e(\ol s_e)$ in state $\vec s$, and the cost of using a path $f\in A_i$ is $c(f,\os) = \sum_{e\in f}c_e(\ol s_e)$. 
Thus except in settings where agents' types or exact strategies matter, we may use $\vec s$ and $\os$ interchangeably.
The \emph{social cost} in a profile $\vec s$ is
$$SC(\vec s)=\sum_{i=1}^m \sum_{f\in A_i} s_{f,i} c(f,\os) = \sum_{e\in E} \ol s_e c_e( \ol s_e).$$ 
The second equality is since we multiply the cost $c(t)$ by the mass $t$ (``number of agents'') who experience it.

\newpar{Equilibrium and potential.}

Without uncertainty, a state $\vec s$ for an NRG is an
\emph{equilibrium} if for every agent type $i$ and
actions
$f_1, f_2 \in A_i$ with $s_{i,f_1} > 0$, $c(f_1,\os) \leq
c(f_2,\os)$. That is, if no agent can switch to a path with a lower
cost.  This is the analogy of a Nash equilibrium in nonatomic games.

In nonatomic games, $\phi(\vec s)$ is a \emph{potential function}, if
any (infinitesimally small) rational move, i.e., a move that decreases
the cost of the moving agents, also lowers the potential.  $\phi(\vec
s)$ is a \emph{weak potential function} if at any state there is at
least one such move (although some rational moves may increase
$\phi$).
Any game with a potential function is \emph{acyclic}, in the sense that such ``infinitesimal best responses''  of self interested agents are guaranteed to converge to a local minimum of the potential function
(and an equilibrium). 
A game with a weak potential may have cycles, but from any state there is  some path of rational moves that leads to an equilibrium.

It is well known that NRGs  have a potential function, which is defined as (we omit the argument $\GG$ when it is clear from the context): $\phi(\vec s) = \phi(\GG,\vec s) = \sum_{e\in E}\int_{t=0}^{\ol s_e}c_e(t) dt.$
Furthermore, in a NRG every local minimum of the potential is also a global minimum;  all equilibria  have the same social cost; and in every equilibrium all agents of type~$i$ experience the same cost~\cite{aashtiani1981equilibria,milchtaich2000generic,roughgarden2004bounding}. 

\newpar{Affine routing games.}

In an {\em affine} NRG, all cost functions take the form of a linear function. That is, $c_e(t) = a_e t + b_e$ for some constants $a_e,b_e\geq 0$. In an affine game $\GG$, the social cost can be written as $SC(\GG,\vec s)= \sum_{e\in E} a_e (\ol s_e)^2 + b_e \ol s_e$; and the potential as $\phi(\GG,\vec s) = \sum_{e\in E} \frac12 a_e (\ol s_e)^2 + b_e \ol s_e$.
{\em Pigou's example} is the special case of an affine RSG with two
resources, where $c_1(t)=1$ and $c_2(t)$ is defined
with $b_2=0$. We will use variations of
this example throughout the paper, and denote by $\GG_P(a_2,n)$ the
instance where $c_2(t) = a_2 t$, and there is a mass of $n$ agents.

\newpar{Potential and social cost.}

The 
social cost of every NRG  can be written as the potential of a
suitably modified game. For this, let $\hat \GG$ be a modification of $\GG$,
where we replace every $c_e(t)$ with $\hat c_e(t) = c_e(t) + t c'_e(t)$.
Then, $\phi(\hat \GG,\vec s) = SC(\GG,\vec s)$ for all $\vec
s$~\cite{roughgarden2007routing}. 
 For an affine game, 
the modified cost function is $\hat c_e(t) = 2a_e t +
b_e$; and $\phi(\hat \GG,\vec s) = \sum_{e\in E} a_e (\ol s_e)^2 + b_e \ol s_e =
SC(\GG,\vec s)$.

\paragraph{The price of anarchy.} 

Let $EQ(\GG)$ be the set of equilibria in game $\GG$.  The
\emph{price
  of anarchy} (PoA) of a game is the ratio between the social cost in
the worst equilibrium in $EQ(\GG)$ and the optimal social cost.
Since all equilibria have the same cost
, we can write $\PoA(\GG) =\frac{SC(\vec s^*)}{SC(OPT)}$, where $\vec s^*$ is an arbitrary equilibrium of $\GG$. 
In affine NRGs, it`
is known that  $\PoA(\GG)\leq \frac{4}{3}$, and this bound is attained by $\GG_P(1,1)$~\cite{roughgarden2004bounding}. 


\section{Introducing uncertainty}

In our strict uncertainty model, there is an
underlying base game $\GG$, which is a NRG.
 However given an
 action
profile (state) $\vec s$, 
each agent believes that there is some set of possible states, and
selects her action based on worst-case assumptions.
%

To define this set of possible states, we augment the description of every agent type with an \emph{uncertainty parameter} $r_i\geq 1$, and denote  $\vec r = (r_i)_{i\leq m}$. The special case where all $r_i=r$ is called \emph{homogeneous uncertainty}. 
We adopt distance-based
uncertainty, so that 
in a given state $\vec s$ (where the actual load on edge $e$ is $\ol s_e$),  a type~$i$
agent believes that the load is anywhere in the range $[\ol s_e/r_i,
\ol s_e\cdot r_i]$. 
Consequently, the agent believes that
the cost she
will suffer from using resource $e$ is between $c_e(\ol s_e/r_i)$ and
$c_e(\ol s_e \cdot r_i)$. 
Agents apply 
this reasoning separately to each
resource, thus the load state $\os'$ is considered \emph{possible} in load state $\os$ by a type $i$
agent, if $\ol s'_e\in [\ol s_e/r_i, \ol s_e\cdot r_i]$ for all $e\in E$.\footnote{\label{fn:modal}In the language of modal logic, we say that the $\os'$ is
\emph{accessible} from $\os$ if the above holds. Our accessibility
relation is symmetric, but non-transitive. While transitivity is a
well accepted axiom in epistemic models~\cite{aumann1999interactive},
we argue that it does not make sense when there is a natural metric over states.} 

In other words, consider the distance metric $d(\os,\os')=\min\{x\geq 0 : \forall e\in E,  \ol s_e \geq
\frac{\ol s'_e}{1+x} \wedge \ol s'_e \geq \frac{\ol s_e}{1+x} \}$. Then
$S(\vec s,r_i)=S(\os, r_i) = \{\os'\in \mathbb R_+^{|E|} : d(\os,\os')\leq r_i - 1\}$
is the set of load states that
a type~$i$ agent believes possible given 
$\os$.
Note that $\os'$ may not correspond to any actual state $\vec s'$, e.g. the total load on all paths  may not sum up to total mass $n$, as an agent may not know exactly how many other agents participate.


\subsec{Behavior and equilibria}

\paragraph{Worst-case cost.}

Under the WCC model, each agent cares about the worst possible cost of each action. Thus for an agent of type $i$, the effective cost of choosing path (action) $f\in A_i$ in state $\vec s$ is $c^*_{i}(f,\os) = \max\{c(f,\os'): \os' \in S(\os,r_i)\}$.

Every NRG $\GG$ and uncertainty vector $\vec r$ induce a new nonatomic game $\GG^*(\vec r)$, where the cost functions are $c^*_{i}$.
That is, a type~$i$ agent playing so as to minimize her worst-case cost in $\GG$, behaves exactly like a ``rational'' type $i$ agent (minimizing exact cost) in $\GG^\ast(\vec r)$.
{\em A priori}, $\GG^*(\vec r)$ is not a NRG,  but it has a very similar structure.
For every action $f\in A_i$, according to the WCC model,

$$c^*_i(f,\os) = \sum_{e\in f} \max\{c_e(\ol s'_e): \os' \in S(\os,r_i)\} =  \sum_{e\in f} c_e (r_i \ol s_e).$$

Since $c_e(r_i t)$ can be written as a player-specific cost function, $c^*_{i,e}(t)$, we have that $\GG^*(\vec r)$ is a NRG with 
{\em player-specific} costs~\cite{milchtaich2005topological}, where
each player type can adopt a different cost function; 
 e.g., for affine games $c^*_{i,e}(t) = r_i a_e t+b_e$. 


\newpar{Worst-case regret.}

We get a different modified game, $\GG^{**}(\vec r)$, under the WCR model.
The {\em regret} (for a type $i$ agent) of playing action $f$ in state
$\vec s'$ is defined as $REG_i(f,\os') = c(f,\os') - \min_{f'\in
  A_i}c(f',\os')$. Given this, the cost $c^{**}_i(f,\os)$ in the modified
game, which is the
\emph{worst-case regret} a type~$i$ agent may suffer for playing $f$, is
 defined as:

$$c^{**}_i(f,\os) = \max\{REG_i(f,\os'): \os' \in S(\os,r_i)\}.$$

This cost function  $c^{**}_i(f,\os)$ does not have a natural decomposition to edge-wise costs, since regret depends also on the load on unused edges. 
An example of WCC and WCR costs in a simple 2-resource RSG appear in Figure~\ref{fig:WCR_example}.
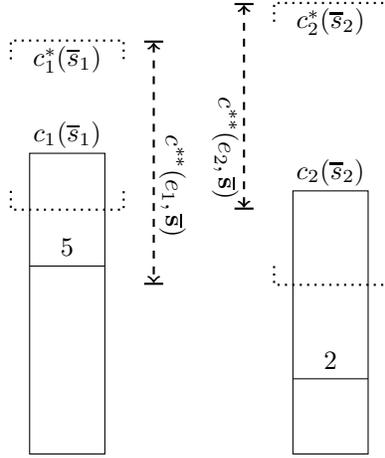
\begin{figure}
%
%
\begin{center}

\begin{tikzpicture}[scale=0.5]



\draw (0,0) rectangle (2,8);
\draw (0,5) -- (2,5);
\draw[dotted,thick] (-0.5,7) -- (-0.5,6.5) -- (2.5,6.5) -- (2.5,7);
\draw[dotted,thick] (-0.5,10.5) -- (-0.5,11) -- (2.5,11) -- (2.5,10.5);

\draw (7,0) rectangle (9,7);
\draw (7,2) -- (9,2);
\draw[dotted,thick] (6.5,5) -- (6.5,4.5) -- (9.5,4.5) -- (9.5,5);
\draw[dotted,thick] (6.5,11.5) -- (6.5,12) -- (9.5,12) -- (9.5,11.5);
\node at (1,5.5) {$5$};
\node at (8,2.5) {$2$};
\node at (1,8.5) {$c_1(\ol s_1)$};
\node at (8,7.5) {$c_2(\ol s_2)$};

\node at (1,10.5) {$c^*_1(\ol s_1)$};
\node at (8,11.5) {$c^*_2(\ol s_2)$};

\draw[|<->|,dashed,thick] (3.3,11) -- (3.3,4.5);
\draw[|<->|,dashed,thick] (5.7,12) -- (5.7,6.5);
\node[label=below:\rotatebox{-90}{$c^{**}(e_1,\os)$}] at (3.8,9) {};
\node[label=below:\rotatebox{-90}{$c^{**}(e_2,\os)$}] at (5.2,10) {};
%
%
%
%
%
%
%
\end{tikzpicture}
\end{center}
\caption{\label{fig:WCR_example}Two resources with base costs $c_1(\ol s_1) = 5+\ol s_1, c_2(\ol s_2)=2+\ol s_2$. The figure shows the true costs for $\ol s_1=3,\ol s_2=5$. The dotted brackets show the range of possible costs for $r=2$, where the upper bracket is the WCC cost. The dashed lines are the WCR costs. We can see that the better resource under WCC is $e_1$, but $e_2$ is better under WCR, as $c^{**}(e_2,\os) = 12- 6.5 = 11-4.5 = c^{**}(e_1,\os)$.}

 \end{figure}

\newpar{Equilibrium.}

A {\em WCC equilibrium} is a state where no agent can improve her worst-case cost w.r.t. her uncertainty level. By definition of the cost function $c^*$, the WCC equilibria of $\GG$ for uncertainty values $\vec r$ are exactly the Nash equilibria of $\GG^*(\vec r)$. 
Similarly , a {\em WCR equilibrium} is a Nash equilibrium of $\GG^{**}(\vec r)$.
Since both of $\GG^{*}(\vec r)$ and $\GG^{**}(\vec r)$ are special cases of nonatomic games, existence of equilibria follows from general existence theorems~\cite{schmeidler1973equilibrium}. However the other properties  of NRG, such as the existence of a potential function, and bounds on the PoA, are not guaranteed. 

\section{Routing Games with WCC players}

\paragraph{Equilibrium and convergence.}

 
For the special case of $r_i=r$ for all $i\leq m$, it is not hard to see that $\GG^*(r)$ is a non player-specific NRG. This is since $c^*_{e,i}(\os)= c_e(r \cdot \ol s_e)$  is only a function of $\ol s_e$. We denote this modified cost function  by $c^r_e(t)$.
It follows that $\GG^*(r)$ is a potential game, where $\phi(\GG^*(r),\vec s) = \sum_e \int_{t=0}^{\ol s_e}c^r_e(t) dt$. 
Thus $\GG^*(r)$ is acyclic, 
 the equilibria of $\GG^*(r)$ are the minima of $\phi(\GG^*(r),\vec s)$, and all equilibria have the same social cost.

The more interesting question is what properties of NRG are
maintained when agents have different uncertainty parameters.  We have
already noted that in $\GG^*(\vec r)$ there is at least one equilibrium. 
Player-specific RSGs are known to have a weak potential~\cite{milchtaich1996congestion}, but this does not preclude cycles.
Indeed, we show that a cycle may occur even in an RSG where agents only differ in their uncertainty level.
%
%
%
\begin{proposition}\label{th:example_cycles}
There is an RSG $\GG$ with 3 resources, and a vector $\vec r$ s.t. $\GG^*(\vec r)$ contains a cycle. 
\end{proposition}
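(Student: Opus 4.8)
The plan is to construct an explicit RSG with three parallel links and a small number of agent types differing only in their uncertainty parameter $r_i$, and then exhibit a finite sequence of rational (worst-case-cost-improving) moves that returns to its starting state. Since $\GG^*(\vec r)$ is a player-specific NRG whose cost functions are $c^*_{i,e}(t) = c_e(r_i t)$, I would work directly with these modified costs: a move is rational for a small group of type-$i$ agents currently on edge $e$ if switching to edge $e'$ strictly decreases $c_e(r_i \ol s_e) \to c_{e'}(r_i \ol s_{e'})$ (evaluated at the post-move loads, in the nonatomic/infinitesimal sense). The key asymmetry to exploit is that different types "see" the same load state through different multiplicative lenses, so an edge that looks cheap to a low-uncertainty type can look expensive to a high-uncertainty type, and vice versa; this is exactly the ingredient that breaks the potential argument.

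Concretely, I would take affine costs, so $c^*_{i,e}(t) = r_i a_e t + b_e$, and use (at least) two uncertainty types, say $r_1 = 1$ and $r_2 = r$ for some large $r$, with total masses chosen so that the two types have comparable influence. I would pick the three edges' slopes $a_1, a_2, a_3$ and intercepts $b_1, b_2, b_3$ so that: starting from some state $\vec s^{(0)}$, a small mass of type-$1$ agents strictly prefers to move from edge $1$ to edge $2$; in the resulting state $\vec s^{(1)}$, a small mass of type-$2$ agents strictly prefers to move from edge $2$ to edge $3$; then type-$1$ agents want to move from $3$ back to $1$ (or some analogous rotation), and so on around a cycle of length three (or possibly longer) returning to $\vec s^{(0)}$. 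This is the standard "rock–paper–scissors" style construction used to show player-specific congestion games can cycle (cf. Milchtaich), specialized so that the only heterogeneity is the $r_i$'s. I would present the chosen numbers and then verify the chain of strict inequalities $c^*_{i,e}(\cdot) > c^*_{i,e'}(\cdot)$ at each step, which is a routine but slightly delicate calculation.

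The main obstacle, and the part requiring the most care, is making the cycle consistent: each move changes the loads, and the inequality that must hold for the next move depends on those updated loads, so I need a single set of parameters $(a_e, b_e, r_i, n_i)$ that simultaneously satisfies all the strict inequalities around the loop. In the nonatomic setting one typically thinks of moving an infinitesimal mass $\eps$ and returning to the same state after the loop, so I would first identify a cyclic pattern of "which type prefers which edge" as a function of the current occupancies, then choose parameters realizing that pattern, and finally check that the net effect of one full loop is the identity on $\vec s$. A secondary subtlety is bookkeeping the nonatomic equilibrium condition correctly (moves are by infinitesimal masses and the relevant comparison is of marginal costs $c^*_{i,e}(\ol s_e)$ at the current load), and confirming that at every intermediate state the prescribed move is genuinely improving and not blocked by an even cheaper alternative for that type. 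Once a consistent parameter choice is pinned down, exhibiting the explicit three-state cycle and its verification completes the proof; the existence of a weak potential for player-specific RSGs guarantees this does not contradict equilibrium existence, it only shows best-response dynamics need not converge.
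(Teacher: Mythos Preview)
Your overall strategy---porting Milchtaich's player-specific RSG cycle into the setting where the player-specific costs are $c_{i,e}^*(t)=c_e(r_i t)$---is exactly what the paper does. But the two concrete choices you commit to would both break the construction.

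\textbf{Affine costs are too rigid.} The Milchtaich-style inequalities you need force some pairs of edge cost functions to cross \emph{twice}. In the paper's construction (with $r_1=1$ and $r_2=r_3=10$) one needs, among other things, $c_z(1)<c_y(1)$, $c_y(2)<c_z(2)$, $c_y(10)<c_z(10)$, and $c_z(20)<c_y(20)$; so $c_y$ and $c_z$ must cross once in $(1,2)$ and again in $(10,20)$. Two affine functions cross at most once, so the chain is infeasible with affine costs. The paper sidesteps this by taking arbitrary monotone $c_e$ and simply listing a chain of eleven strict inequalities that any monotone interpolation can satisfy.

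\textbf{Two types are not enough for the sketched rotation.} Your three-step ``type~1 goes $1\to 2$, type~2 goes $2\to 3$, type~1 goes $3\to 1$'' outline already has a bookkeeping problem (no type-1 mass was placed on edge~3), and more fundamentally Milchtaich's minimal cycle uses three agents, not two. The paper mirrors this with three nonatomic types of unit mass each; interestingly only two distinct uncertainty levels are needed ($r_1=1$, $r_2=r_3=10$), but types~2 and~3 must be kept formally separate so that they can sit on different resources along the six-state cycle $(y,x,x)\to(z,x,x)\to(z,z,x)\to(z,z,y)\to(y,z,y)\to(y,x,y)\to(y,x,x)$.

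In short: keep the Milchtaich template, but drop the affine restriction and use three types (two sharing the large $r$). Once you write down the single monotone chain
\[
c_z(1)<c_x(1)<c_y(1)<c_y(2)<c_z(2)<c_y(10)<c_x(10)<c_z(10)<c_z(20)<c_x(20)<c_y(20),
\]
all six improvement steps follow directly, and the ``routine but delicate calculation'' you anticipate disappears.
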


\subsection{Equilibrium quality for affine games}

Recall that under the WCC model, agents play as if they take part of the game $\GG^*(\vec r)$, while their actual, realized costs are those
in underlying game $\GG$. 
We thus define the {\em Price of Anarchy for WCC players with uncertainty vector $\vec r$} as:

$$\CPoA(\GG,\vec r) = \max_{\vec s\in EQ(\GG^*(\vec r))} \frac{SC(\GG,\vec s)}{SC(\GG,OPT(\GG))}.$$

We  focus our analysis on games with affine costs, and 
look for bounds on $\CPoA(\GG,\vec r)$. In particular, we  explore   whether players with uncertainty reach better or worse social outcomes under  WCC behavior than under standard, complete information equilibria.
%

\paragraph{Homogeneous uncertainty.}

We start with the simplifying assumption that $r_i=r$ for all types.
Recall that in this case $\GG^*(r)$ is a non player specific NRG,
where the cost of each edge is modified to $c^r_e(t) = r a_e t +
b_e$.
These modified costs can be attained in other contexts that do not involve uncertainty. For example, this can
be achieved through taxation~\cite{cole2003much}. 
Following the discussion in Potential and Social Cost, 
 an optimal taxation scheme would perturb the cost functions so that the
 realized cost is $\hat c_e(t) = c_e(t)+t c'_e(t)$, as this will
 guarantee that $\phi(\hat \GG,\vec s)= SC(\GG,\vec s)$ for all $\vec s$.  In this way, minimizing the potential of $\hat \GG$, as happens
 in equilibrium, also minimizes the social cost in $\GG$ (see Section
 18.3.1 in Roughgarden~\shortcite{roughgarden2007routing} for a detailed explanation).

For the special case of affine games, it is easy to see that
the effect of uncertainty level $r=2$
is equivalent to that of an optimal taxation scheme. That is, $\phi(\GG^*(2),\vec s) =
SC(\GG,\vec s)$ for all $\vec s$.  This means that if all agents
 adopt the WCC viewpoint for an uncertainty type of $r=2$,
 then they would play the social optimum.  Unfortunately, the value of
 $r$ is not a design parameter---we cannot decide for the agents how
 uncertain they should be, since this reflects their beliefs.
 We would
 therefore like to have guarantees on equilibrium quality for any
 value of $r$. The next lemma provides our first result in this
 direction.

We denote $\phi^r(\vec s) = \phi(\GG^*(r),\vec s)$.
It will be convenient to treat the cases $r\geq 2$ and $r\leq 2$ separately.
	\begin{lemma}
\label{lemma:affine_optimal}
Let $\GG$ be an affine NRG, and suppose that $r\geq 2$. Then  for all $\os$,
	 $\phi^r(\vec s) \in [SC(\GG,\vec s),\frac{r}{2}SC(\GG,\vec s)]$.
\end{lemma}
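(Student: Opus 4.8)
The plan is to write both quantities out explicitly using the affine structure and reduce the claim to the nonnegativity of two sums. Since $c_e(t)=a_et+b_e$, the modified cost in $\GG^*(r)$ is $c^r_e(t)=ra_et+b_e$, so
$$\phi^r(\vec s)=\sum_{e\in E}\int_{t=0}^{\ol s_e}(ra_et+b_e)\,dt=\sum_{e\in E}\Big(\tfrac r2 a_e(\ol s_e)^2+b_e\ol s_e\Big),$$
while from the Preliminaries $SC(\GG,\vec s)=\sum_{e\in E}\big(a_e(\ol s_e)^2+b_e\ol s_e\big)$. So the uncertainty level $r$ acts only by multiplying the ``slope part'' of the potential by $r/2$, leaving the ``intercept part'' untouched, whereas $SC$ is obtained from the base potential precisely by doubling the slope part.

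Next I would introduce $X:=\sum_{e\in E}a_e(\ol s_e)^2$ and $Y:=\sum_{e\in E}b_e\ol s_e$. Because $a_e,b_e\ge 0$ and $\ol s_e\ge 0$ for every $e$ — and this holds for an arbitrary load state $\os$, including ones not derived from a valid profile — we have $X\ge 0$ and $Y\ge 0$. In this notation $\phi^r(\vec s)=\tfrac r2 X+Y$ and $SC(\GG,\vec s)=X+Y$.

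The two bounds are then immediate. For the lower bound, $\phi^r(\vec s)-SC(\GG,\vec s)=(\tfrac r2-1)X\ge 0$ since $r\ge 2$ and $X\ge 0$. For the upper bound, $\tfrac r2 SC(\GG,\vec s)-\phi^r(\vec s)=\tfrac r2(X+Y)-(\tfrac r2 X+Y)=(\tfrac r2-1)Y\ge 0$, again using $r\ge 2$ and $Y\ge 0$. Together these give $SC(\GG,\vec s)\le \phi^r(\vec s)\le \tfrac r2 SC(\GG,\vec s)$.

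There is essentially no obstacle: the whole content is that the hypothesis $r\ge 2$ puts the factor $r/2$ on the correct side of $1$, and that the quadratic and linear parts of the cost can be separated. The one remark worth including is that the same two inequalities in fact hold term by term, so no cancellation across edges is used; this also shows tightness — the lower bound is attained when all $b_e=0$ (e.g. in a $\GG_P(a_2,n)$-type instance) and the upper bound when all $a_e=0$.
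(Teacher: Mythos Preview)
Your proof is correct and is essentially the same as the paper's: both compute $\phi^r(\vec s)=\sum_e(\tfrac{r}{2}a_e(\ol s_e)^2+b_e\ol s_e)$ explicitly and compare it termwise to $SC(\GG,\vec s)=\sum_e(a_e(\ol s_e)^2+b_e\ol s_e)$ using $r/2\ge 1$. Your $X,Y$ notation is a clean way to package the same two-line inequality the paper writes out directly.
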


%
\begin{proposition}\label{th:r_geq_2_bound}For $r\geq 2$, and any affine NRG $\GG$, $\CPoA(\GG,r)\leq \frac{r}{2}$. 
\end{proposition}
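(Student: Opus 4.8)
The plan is to read this off Lemma~\ref{lemma:affine_optimal} together with the fact that, for homogeneous uncertainty, $\GG^*(r)$ is a (non player-specific) potential game whose equilibria are exactly the global minimizers of $\phi^r$ (as established in the ``Equilibrium and convergence'' discussion, where $c^r_e(t) = r a_e t + b_e$). So this proposition should be a short corollary; Lemma~\ref{lemma:affine_optimal} does all the real work.

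Concretely, I would first fix an arbitrary equilibrium $\vec s^* \in EQ(\GG^*(r))$ and let $\vec o = OPT(\GG)$ be a social optimum of the base game $\GG$. Since $\GG^*(r)$ is an NRG with potential $\phi^r$, every local minimum of $\phi^r$ is a global minimum, so $\vec s^*$ minimizes $\phi^r$ over all states; in particular $\phi^r(\vec s^*) \le \phi^r(\vec o)$.

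Next I would chain the two inequalities from Lemma~\ref{lemma:affine_optimal}: applying its lower bound at $\vec s^*$ and its upper bound at $\vec o$ gives
$$SC(\GG,\vec s^*) \;\le\; \phi^r(\vec s^*) \;\le\; \phi^r(\vec o) \;\le\; \frac{r}{2}\, SC(\GG,\vec o).$$
Dividing by $SC(\GG,OPT(\GG)) = SC(\GG,\vec o)$ yields $SC(\GG,\vec s^*)/SC(\GG,OPT(\GG)) \le r/2$. Since $\vec s^*$ was an arbitrary equilibrium (and in any case all equilibria of the potential game $\GG^*(r)$ have the same social cost), taking the maximum over $EQ(\GG^*(r))$ gives $\CPoA(\GG,r) \le r/2$, as claimed.

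There is essentially no hard step here; the only points worth stating carefully are (i) the invocation of the potential-game structure of $\GG^*(r)$ to guarantee that an equilibrium is a \emph{global} minimizer of $\phi^r$, not merely a local one, and (ii) that $SC(\GG,OPT(\GG))$ is well-defined and positive so the PoA ratio makes sense. Both are immediate from the preliminaries.
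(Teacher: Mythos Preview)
Your proof is correct and is essentially the same as the paper's: both pick an equilibrium $\vec s^*$ of $\GG^*(r)$ (a global minimizer of $\phi^r$), the optimum $\vec s_O=OPT(\GG)$, and chain $SC(\GG,\vec s^*)\le \phi^r(\vec s^*)\le \phi^r(\vec s_O)\le \tfrac{r}{2}SC(\GG,\vec s_O)$ via Lemma~\ref{lemma:affine_optimal}. Your write-up is just slightly more explicit about why an equilibrium globally minimizes $\phi^r$.
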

\begin{proof}
Let $\vec s^*$ be a global optimum of $\phi^r$, $\vec s_O=OPT(\GG)$. By Lemma~\ref{lemma:affine_optimal} $SC(\GG,\vec s^*) \leq \phi^r(\vec s^*) \leq \phi^r(\vec s_O) \leq \frac{r}{2} SC(\GG,\vec s_O)$.
%
\end{proof}


We can similarly derive a bound of $2/r$ for the range $r\in [1,2)$, but we can do better. 
We next show that as we increase the uncertainty level $r$ from $1$ towards $2$, we get a smooth improvement in social cost. 
 \begin{theorem}
\label{th:r_leq_2_bound}
For $r\in [1,2]$, and for any affine NRG $\GG$, $\CPoA(\GG,r) \leq 2-\frac{2}{r} + (\frac{2}{r}-1)\PoA(\GG) 
\leq \frac{2+2r}{3r}$.
\end{theorem}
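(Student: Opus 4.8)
The plan is to dispose of the easy second inequality first, and then reformulate the main inequality into a single clean statement about the ``quadratic energy'' at three distinguished profiles.

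\emph{Second inequality.} Since $r\in[1,2]$, the coefficient $\frac2r-1$ is nonnegative, so $2-\frac2r+(\frac2r-1)\PoA(\GG)$ is nondecreasing in $\PoA(\GG)$. Substituting the affine bound $\PoA(\GG)\le\frac43$ and simplifying gives exactly $\frac{2+2r}{3r}$; this is a one-line computation. Everything below therefore concerns the first inequality.

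\emph{Reformulation.} For an affine game write $Q(\vec s)=\sum_e a_e\ol s_e^2$ and $L(\vec s)=\sum_e b_e\ol s_e$, so that $SC(\GG,\vec s)=Q+L$, the potential of $\GG$ is $\phi^1(\vec s)=\half Q+L$, and $\phi^r(\vec s)=\frac r2Q+L$. These yield the identity $\phi^r=(2-r)\phi^1+(r-1)SC$. Let $\vec s^*$ be a WCC equilibrium (minimizer of $\phi^r$), $\vec s_O=OPT(\GG)$ (minimizer of $SC=\phi^2$), and $\vec s_E$ a Nash equilibrium of $\GG$ (minimizer of $\phi^1$). With $\lambda=\frac2r-1\in[0,1]$ and $\PoA(\GG)\cdot SC(\GG,\vec s_O)=SC(\GG,\vec s_E)$, the claimed bound is exactly
\[
SC(\GG,\vec s^*)\le(1-\lambda)\,SC(\GG,\vec s_O)+\lambda\,SC(\GG,\vec s_E),
\]
an interpolation between the optimum cost (reached at $r=2$) and the complete-information equilibrium cost (at $r=1$).

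\emph{Reduction via the value curve.} Each of $\vec s^*,\vec s_O,\vec s_E$ minimizes a functional $\mu Q+L$ with $\mu=\frac r2,1,\half$, so their images $(Q,L)$ lie on the convex lower-left boundary of the feasible region $\{(Q(\vec s),L(\vec s))\}$, and since the weight $\mu$ decreases from $1$ to $\half$ we get $Q(\vec s_O)\le Q(\vec s^*)\le Q(\vec s_E)$ with $(Q(\vec s^*),L(\vec s^*))$ on the arc between the other two. Convexity of the boundary then gives $L(\vec s^*)\le(1-t)L(\vec s_O)+t\,L(\vec s_E)$ for $t=\frac{Q(\vec s^*)-Q(\vec s_O)}{Q(\vec s_E)-Q(\vec s_O)}$; adding $Q(\vec s^*)=(1-t)Q(\vec s_O)+t\,Q(\vec s_E)$ yields $SC(\GG,\vec s^*)\le(1-t)SC(\GG,\vec s_O)+t\,SC(\GG,\vec s_E)$. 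As $SC(\GG,\vec s_E)\ge SC(\GG,\vec s_O)$, it suffices to prove $t\le\lambda$, i.e. the single inequality
\[
r\,Q(\vec s^*)\le(2-r)\,Q(\vec s_E)+2(r-1)\,Q(\vec s_O).
\]

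\emph{The main obstacle.} All of the above is soft; the entire content is this last inequality comparing the quadratic energies at the three minimizers. It is \emph{not} implied by convexity of the boundary alone (a piecewise-linear value curve violates it): it genuinely uses the quadratic structure of affine costs. I would prove it from the variational optimality conditions, each stating $\sum_e(2\mu a_e\ol s_e^{(\mu)}+b_e)(\ol s'_e-\ol s_e^{(\mu)})\ge0$ for all feasible $\vec s'$, testing the condition for $\vec s^*$ against $\vec s_O$ and $\vec s_E$ and bounding the bilinear terms $\sum_e a_e\ol s_e^*\ol s_e^{O}$ and $\sum_e a_e\ol s_e^*\ol s_e^{E}$ by Cauchy--Schwarz (tight on parallel-link instances). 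The delicate point is that the crude bound $Q(\vec s_O)\le Q(\vec s^*)\le Q(\vec s_E)$ does not suffice: for $r<\frac32$ the resulting quadratic in $\sqrt{Q(\vec s^*)}$ has positive discriminant, so the precise optimal value of $Q(\vec s^*)$ must be fed back in. The extremal instance is Pigou's example $\GG_P(a_2,n)$, where the inequality collapses to $(r-1)(r-2)\le0$ and is therefore tight exactly at the endpoints $r=1$ and $r=2$.
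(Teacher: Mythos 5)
Your soft steps are fine and, in fact, run parallel to the paper's argument: the paper also proves the theorem by bounding $SC(\GG,\vec s^*)$ by the interpolation $(1-\lambda)SC(\GG,\vec s_O)+\lambda\,SC(\GG,\vec s_E)$. But everything in your write-up hinges on the one inequality you do not prove, $r\,Q(\vec s^*)\le(2-r)\,Q(\vec s_E)+2(r-1)\,Q(\vec s_O)$, for which you only sketch a variational/Cauchy--Schwarz strategy. That is a genuine gap, and it sits exactly where the paper invests its one real idea: the paper checks that the specific convex combination $\vec s^\beta=\beta\vec s_O+(1-\beta)\vec s_E$ with $\beta=\frac{2r-2}{r}$ satisfies the first-order conditions of $\phi^r$, via the identity $r a_e\bigl(\beta\ol s_{O,e}+(1-\beta)\ol s_{E,e}\bigr)+b_e=(r-1)\bigl(2a_e\ol s_{O,e}+b_e\bigr)+(2-r)\bigl(a_e\ol s_{E,e}+b_e\bigr)$, concludes that $\vec s^\beta$ is an equilibrium of $\GG^*(r)$, and finishes by convexity of $SC$. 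Had you had that identification, your missing inequality would follow in one line: all equilibria of $\GG^*(r)$ share the same $Q$ and $L$, so $Q(\vec s^*)=Q(\vec s^\beta)\le\beta Q(\vec s_O)+(1-\beta)Q(\vec s_E)$ by convexity of $Q$, and $r\beta=2(r-1)$, $r(1-\beta)=2-r$. So your plan is missing precisely the lemma that does all the work.

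Worse, your sketched strategy cannot be completed, because the inequality you reduced to is false without an extra hypothesis. Take Pigou's instance $\GG_P(1,0.6)$ (costs $1$ and $t$, mass $0.6$) with $r=\frac32$: the Nash equilibrium of $\GG$ and the unique equilibrium of $\GG^*(r)$ both put all mass on the linear edge (since $r a n=0.9<1$), while the optimum puts $\frac12$ there; hence $Q(\vec s^*)=Q(\vec s_E)=0.36$, $Q(\vec s_O)=0.25$, and $rQ(\vec s^*)=0.54>0.43=(2-r)Q(\vec s_E)+2(r-1)Q(\vec s_O)$. (In particular your closing claim that on Pigou instances the inequality ``collapses to $(r-1)(r-2)\le0$'' is not correct.) Indeed, on this instance $\CPoA(\GG,r)=\PoA(\GG)=\frac{36}{35}>\frac{106}{105}=2-\frac2r+(\frac2r-1)\PoA(\GG)$, so the per-game first inequality of the theorem itself fails there (the weaker, game-independent bound $\frac{2+2r}{3r}$ still holds). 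You should be aware that the paper's own proof has the same blind spot: it sets $g_e(\vec s_E)=0$ and $h^2_e(\vec s_O)=0$ edge-by-edge, i.e., it silently assumes interior minimizers with coinciding supports; in the example above the supports differ, and $\vec s^\beta$ is not an equilibrium of $\GG^*(r)$ (it places mass $\frac1{15}$ on the constant edge at cost $1$ while the other edge costs $0.8$ in $\GG^*(r)$). Under a support-matching hypothesis the paper's convex-combination lemma holds and your reduction closes; without it, no argument can supply your missing step, because the statement being proved is itself violated.
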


Let $q^*(r) = \max_{\GG}\CPoA(\GG,r)$; and  $q^*_P(r) = \max_{\GG_P}\CPoA(\GG_P,r)$. 
The results above give us an upper bound on $q^*(r)$. By a careful analysis of the Pigou-type instances, we can compute $q^*_P(r)$ exactly, which also provides us with a lower bound on $q^*(r)$.  

%

\begin{corollary}
 For $r\in [1,2]$, $q^*(r) \in [\frac{4}{4r-r^2},\frac{2+2r}{3r}]$. For $r>2$, we have $q^*(r) \in [\frac{r^2}{4(r-1)},r/2]= [r/4+o(1),r/2]$. Also, $q^*_P(r)$ is equal to our lower bound on $q^*(r)$.\footnote{Roughgarden~\shortcite{roughgarden2003price} showed that for any class of cost functions, a worst-case example for the PoA can be constructed on an RSG with two resources. It is not clear if this is also true for the $\CPoA$, as in our case the optimum and the equilibrium are computed for two different games. However if a similar result can be proved, then our upper bounds would collapse to the lower bounds.}  
\end{corollary}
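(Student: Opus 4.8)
The plan is to combine the upper bounds already established with a direct computation on Pigou-type instances for the lower bounds. For the upper bounds: since $q^*(r)=\max_{\GG}\CPoA(\GG,r)$ ranges over all affine NRGs, Theorem~\ref{th:r_leq_2_bound} immediately gives $q^*(r)\le\frac{2+2r}{3r}$ for $r\in[1,2]$ (invoking $\PoA(\GG)\le\frac43$ and $\frac2r-1\ge0$ for the second inequality there), and Proposition~\ref{th:r_geq_2_bound} gives $q^*(r)\le r/2$ for $r>2$. For the lower bounds, each $\GG_P(a_2,n)$ is an affine NRG, so $q^*(r)\ge q^*_P(r)$, and it suffices to compute $q^*_P(r)$ exactly; the final assertion of the corollary is then automatic, since the lower bound on $q^*(r)$ was obtained precisely as $q^*_P(r)$, so the two coincide by construction.

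To compute $q^*_P(r)$, I would first reduce the parameters: rescaling all edge loads by $\lambda$ while replacing $a_2$ by $a_2/\lambda$ multiplies every social cost by $\lambda$ and preserves equilibria, so $\CPoA(\GG_P(a_2,n),r)$ depends only on the product $a_2 n$; set $a_2=1$. In $\GG^*(r)$ the costs are $c_1^r\equiv1$ and $c_2^r(t)=rt$, so the WCC equilibrium places mass $x^*=\min\{1/r,n\}$ on the second resource, while the optimum of $\GG$ places $x_O=\min\{1/2,n\}$ there; with $SC_{\GG}(x)=(n-x)+x^2$ we must maximize $SC_{\GG}(x^*)/SC_{\GG}(x_O)$ over $n>0$. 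For $r\le2$ we have $1/r\ge1/2$, so as $n$ decreases the constraint $n\ge1/r$ binds first; on that range both $x^*$ and $x_O$ are interior and $SC_{\GG}(x^*)-SC_{\GG}(x_O)=(\frac12-\frac1r)^2$ is constant, hence the ratio is decreasing in $n$ and is maximized at $n=1/r$, giving $\frac{1/r^2}{1/r-1/4}=\frac{4}{4r-r^2}$; the residual regime $n<1/r$ is checked to yield nothing larger. For $r>2$ the maximum sits in the regime $1/r\le n<1/2$, where the equilibrium is interior ($x^*=1/r$) but the optimum is at the boundary ($x_O=n$, so $SC_{\GG}(x_O)=n^2$); maximizing $\frac{n-1/r+1/r^2}{n^2}$ over that interval gives the interior critical point $n^\star=\frac{2(r-1)}{r^2}$ and value $\frac{r^2}{4(r-1)}$, which dominates the value $2-\frac4r+\frac4{r^2}$ of the competing regime $n\ge1/2$ because their difference equals $\frac{(r-2)^4}{4r^2(r-1)}\ge0$.

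The genuinely nontrivial part is the case bookkeeping in this last step: there are four regimes (interior versus boundary WCC equilibrium, crossed with interior versus boundary optimum of $\GG$), the worst one switches at $r=2$, and within each regime one must determine, via monotonicity of the ratio in $n$, whether the maximizer is an endpoint or an interior critical point of the feasible interval. Everything else is routine: the upper bounds are quotations of Theorem~\ref{th:r_leq_2_bound} and Proposition~\ref{th:r_geq_2_bound}, the inequality $q^*(r)\ge q^*_P(r)$ is trivial, and the asymptotic form reported for $r>2$ is just the expansion $\frac{r^2}{4(r-1)}=\frac r4+\frac14+o(1)$ placed against the upper bound $r/2$, which exhibits the residual factor-$2$ gap as $r\to\infty$.
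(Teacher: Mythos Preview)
Your proposal is correct and follows essentially the same route as the paper: the upper bounds are quoted from Proposition~\ref{th:r_geq_2_bound} and Theorem~\ref{th:r_leq_2_bound}, and the lower bounds are obtained by computing $q^*_P(r)$ exactly via a case analysis on Pigou instances, which is precisely the content of Propositions~\ref{th:upper_RSG_ge2} and~\ref{th:upper_RSG_le2} in the Appendix. Your normalization $a_2=1$ and the explicit identification of the critical point $n^\star=2(r-1)/r^2$ (together with the clean $(r-2)^4/\bigl(4r^2(r-1)\bigr)$ comparison of regimes) streamline the presentation somewhat, but the underlying computation and case structure are the same.
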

Note that for $r=1$ and $r=2$, we have $q^*(r)=q^*_P(r)$, and both are equal to the familiar values of $4/3$ and $1$, respectively. See Figure~\ref{fig:WCC_bounds} for a graphical comparison of the bounds.
\begin{figure}
%
%
\begin{center}

\begin{tikzpicture}[scale=0.9]

\begin{axis}[xlabel=$r$,ylabel=PoA,width=8.8cm,height=4.2cm, enlarge x limits = -1]
    \addplot[domain=1:2, red,    thin] {4/(4*x - x^2)};
		  \addplot[domain=2:9, red,  thin] {x^2/(4*(x-1))};
			\addplot[domain=1:2, thin, red] {(2+2*x)/(3*(x))};
			\addplot[domain=2:5.5, thin, red] {x/2};
		\addplot[domain=1:9, thin,dotted] {4/3};
		\addplot[domain=1:9, thin,dotted] {1};
		\addplot[domain=1:3.732, blue, dashed]{16/(8*(x+1/x) -(x+1/x)^2)};
			\addplot[domain=3.732:9, blue, dashed]{((x+1/x)^2)/(8*(x+1/x)-16)};
\end{axis}

\end{tikzpicture}
\caption{\label{fig:WCC_bounds}The solid red lines are upper- and
lower-bounds on
$q^*(r)$, i.e. the maximum PoA, across all games, 
for WCC players with uncertainty parameter $r$
(the lower bound is also exactly $q_P^*(r)$, the maximum PoA across
Pigou examples).
%
The blue dashed line is exactly  $q_P^{**}(r)$.
%
The dotted lines mark $4/3$ 
and $1$.}
\end{center}
\end{figure}
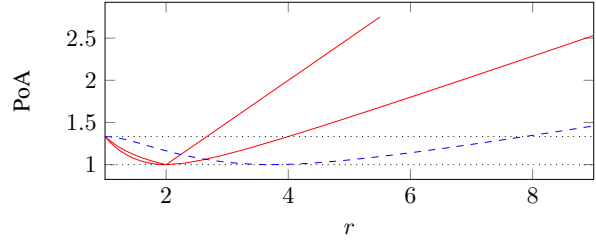

\newpar{Diverse population.}


We would like to show that if we have a population of agents with different uncertainty levels, the social cost does not exceed that of the  upper bound we
have on the worst type;  i.e., that $q^*(\vec r) \leq
q^*(r_i)$ for some type $i$ in the mixture.
We show something very close (our bound is slightly worse when there are types both below and above $r=2$).

Let $j=\argmin_i r_i, k = \argmax_{i}r_i$; let
$\alpha_j=\CPoA(\GG,r_j)$ if $r_j<2$ and $1$ otherwise; let
$\alpha_k=\frac{r_k}{2}$ if $r_k>2$ and $1$ otherwise.
%
\begin{lemma}
\label{lemma:ordered_r}
Let $\vec s,\vec s'$ be any two states in
affine NRG $\GG$,
and consider $r_3 \geq r_2 \geq r_1$. If $\phi^{r_3}(\vec s) \geq \phi^{r_3}(\vec s')$ and $\phi^{r_2}(\vec s) \leq \phi^{r_2}(\vec s')$, then $\phi^{r_1}(\vec s) \leq \phi^{r_1}(\vec s')$ as well. 
\end{lemma}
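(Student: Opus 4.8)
The plan is to exploit the fact that, for a \emph{fixed} state, $\phi^r$ is an \emph{affine} function of the uncertainty parameter $r$. Since $\GG$ is affine, the modified edge cost in $\GG^*(r)$ is $c^r_e(t) = r a_e t + b_e$, so integrating as in the definition of the potential gives
\[
\phi^r(\vec s) \;=\; \sum_{e\in E}\Big(\tfrac{r}{2}\,a_e (\ol s_e)^2 + b_e\,\ol s_e\Big) \;=\; r\cdot P(\vec s) + L(\vec s),
\]
where $P(\vec s) := \sum_{e\in E}\tfrac12 a_e (\ol s_e)^2$ and $L(\vec s) := \sum_{e\in E} b_e\,\ol s_e$ depend only on the load state $\os$, not on $r$.

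First I would set $g(r) := \phi^r(\vec s) - \phi^r(\vec s') = r\,\Delta P + \Delta L$, where $\Delta P := P(\vec s) - P(\vec s')$ and $\Delta L := L(\vec s) - L(\vec s')$; this is affine in $r$ with slope $\Delta P$. In this language the two hypotheses read $g(r_3)\ge 0$ and $g(r_2)\le 0$, and the goal is $g(r_1)\le 0$.

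Next I would pin down the sign of the slope. Subtracting the hypotheses, $(r_3-r_2)\,\Delta P = g(r_3)-g(r_2)\ge 0$; since $r_3\ge r_2$, whenever $r_3>r_2$ this forces $\Delta P\ge 0$, i.e.\ $g$ is nondecreasing in $r$. Then $r_1\le r_2$ gives $g(r_1)\le g(r_2)\le 0$, which is exactly $\phi^{r_1}(\vec s)\le\phi^{r_1}(\vec s')$. The only case this argument does not immediately cover is $r_2=r_3$ (then the two hypotheses collapse to $\phi^{r_2}(\vec s)=\phi^{r_2}(\vec s')$), which I would treat separately; in the intended application the parameters involved are distinct, so this is a non-issue.

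I do not expect a serious obstacle here: the whole lemma is essentially the observation that $\phi^r$ is linear in $r$, so the ``valley-shaped'' hypothesis (the difference $g$ rises from $\le 0$ at $r_2$ to $\ge 0$ at the larger point $r_3$) is precisely what rules out a negative slope, and a nonnegative slope prevents $g$ from dipping back below $0$ as $r$ decreases further to $r_1$. The only things to be careful about are getting the slope-sign bookkeeping right and the boundary case $r_2=r_3$.
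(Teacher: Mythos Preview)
Your approach is essentially identical to the paper's: both define the difference $g(r)=\phi^r(\vec s)-\phi^r(\vec s')$, observe it is affine in $r$ (the paper writes $z(r)=rZ_1+Z_2$), and use the two hypotheses to pin down monotonicity and conclude $g(r_1)\le 0$. Your treatment is in fact slightly more careful: the paper's write-up states ``$z(r)$ is strictly decreasing,'' which is the wrong direction (it should be non-decreasing, exactly as you derive), and the paper silently assumes $r_3>r_2$, whereas you correctly flag the degenerate case $r_2=r_3$ as needing separate handling.
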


\begin{theorem}
\label{th:diverse_max}
Let $\GG$ be a symmetric affine RSG, $\vec r$ be an uncertainty vector. Then $\CPoA(\GG,\vec r) \leq \alpha_j \cdot \alpha_k$.
\end{theorem}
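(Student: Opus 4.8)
\textbf{Proof proposal for Theorem~\ref{th:diverse_max}.}

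The plan is to establish the bound via a chain of inequalities running through the potential functions $\phi^{r_j}$, $\phi^{r_2}$ (for the canonical intermediate level $r=2$), and $\phi^{r_k}$, using Lemma~\ref{lemma:ordered_r} as the key structural tool for comparing potentials across uncertainty levels. Fix an equilibrium $\vec s^*\in EQ(\GG^*(\vec r))$ — but note the immediate difficulty: with a diverse population $\GG^*(\vec r)$ is player-specific and need not be a potential game, so $\vec s^*$ need \emph{not} minimize any single $\phi^r$. This is the main obstacle. The way around it is to use the equilibrium condition only to extract information relating the potentials at $\vec s^*$ and at $\vec s_O = OPT(\GG)$, and then let Lemma~\ref{lemma:ordered_r} propagate a favorable comparison from the extreme uncertainty types to the intermediate ones.

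Concretely, I would proceed as follows. First, observe that at the equilibrium $\vec s^*$, no type-$j$ agent wants to deviate; reasoning about the player-specific potential, or directly about the type-$j$ cost functions $c^{r_j}_e$, one should be able to show that $\vec s^*$ is at least as good as $\vec s_O$ from the ``perspective'' of the lowest type, i.e. $\phi^{r_j}(\vec s^*)\le \phi^{r_j}(\vec s_O)$ — and symmetrically $\phi^{r_k}(\vec s^*)\le\phi^{r_k}(\vec s_O)$ from the perspective of the highest type. (If the equilibrium condition only gives one of these directly, the contrapositive of Lemma~\ref{lemma:ordered_r} — if $\phi^{r_1}(\vec s)>\phi^{r_1}(\vec s')$ and $\phi^{r_3}(\vec s)\le\phi^{r_3}(\vec s')$ then $\phi^{r_2}(\vec s)<\phi^{r_2}(\vec s')$, and its mirror — should let me interpolate.) Second, I apply Lemma~\ref{lemma:ordered_r} with the triple $(r_j, 2, r_k)$ if $r_j\le 2\le r_k$: from $\phi^{r_k}(\vec s^*)\le\phi^{r_k}(\vec s_O)$ and $\phi^{r_j}(\vec s^*)\le\phi^{r_j}(\vec s_O)$ I want to conclude $\phi^{2}(\vec s^*)\le\phi^{2}(\vec s_O)$ as well — here is where the precise hypotheses of the lemma must be checked, possibly requiring me to split into the subcases $r_j\le 2$ vs.\ $r_j>2$ (all types above $2$) and $r_k\ge 2$ vs.\ $r_k<2$ (all types below $2$), which is exactly the source of the two-factor $\alpha_j\cdot\alpha_k$ rather than a single factor.

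Third, having $\phi^{2}(\vec s^*)\le\phi^{2}(\vec s_O)$, I use the affine identity $\phi^{2}(\vec s)=\phi(\GG^*(2),\vec s)=SC(\GG,\vec s)$ (established in the ``Homogeneous uncertainty'' discussion: for affine games $\hat c_e(t)=2a_et+b_e$ and $\phi(\hat\GG,\vec s)=SC(\GG,\vec s)$) to read this as $SC(\GG,\vec s^*)\le SC(\GG,\vec s_O)=SC(\GG,OPT)$, which would already give $\CPoA\le 1$. Since that is clearly too strong in general, the honest version is: when all types are on one side of $2$, say $r_k<2$, I instead run the argument with $\phi^{r_j}$ using the explicit bound $\phi^{r_j}(\vec s)\in[\text{(factor)}\cdot SC(\GG,\vec s),\ldots]$ analogous to Lemma~\ref{lemma:affine_optimal} (and its $r<2$ counterpart behind Theorem~\ref{th:r_leq_2_bound}), sandwiching $SC(\GG,\vec s^*)\le \phi^{r_j}(\vec s^*)\cdot(\cdots)\le\phi^{r_j}(\vec s_O)\cdot(\cdots)\le\alpha_j\cdot SC(\GG,\vec s_O)$; when there are types on both sides, I combine a factor $\alpha_k=r_k/2$ coming from passing $\phi^{r_k}$-comparison down to $\phi^{2}$ and then relating $\phi^{r_k}$ to $SC$ via Lemma~\ref{lemma:affine_optimal}, with a factor $\alpha_j$ from relating $\phi^{r_j}$ to $SC$ on the low side, yielding the product $\alpha_j\cdot\alpha_k$. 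The bookkeeping of which inequality direction each of these potential-vs-$SC$ bounds supplies, and making sure the directions are compatible with what the equilibrium gives at $\vec s^*$, is the delicate part; Lemma~\ref{lemma:ordered_r} is precisely the device that makes the chain close.
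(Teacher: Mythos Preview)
Your Step~1 is the gap, and it is fatal: the equilibrium condition for $\vec s^*$ in the player-specific game does \emph{not} yield $\phi^{r_j}(\vec s^*)\le\phi^{r_j}(\vec s_O)$ (nor the $r_k$ analogue). The fact that no type-$j$ agent wishes to deviate says nothing about where the other types sit, and the potential $\phi^{r_j}$ depends on the full load vector. For a concrete failure, take the Pigou instance $\GG_P(1,1)$ with $n_j=0.1$, $r_j=1$ and $n_k=0.9$, $r_k=100$: the diverse equilibrium is $\ol s^*_2=0.1$, $\ol s^*_1=0.9$, while $\vec s_O$ has $\ol s_2=\ol s_1=0.5$; then $\phi^{1}(\vec s^*)=0.905>0.625=\phi^{1}(\vec s_O)$. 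So the inequality you need at the low end simply does not hold, and no amount of interpolation via Lemma~\ref{lemma:ordered_r} can manufacture it.

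The paper's argument avoids this by never comparing $\vec s^*$ with $\vec s_O$ directly. Instead it compares $\vec s^*$ with the \emph{homogeneous} equilibrium $\vec s^*_k=\argmin\phi^{r_k}$ (or $\vec s^*_j$, depending on the case). The point is that one hypothesis of Lemma~\ref{lemma:ordered_r} then comes for free: $\phi^{r_k}(\vec s^*)\ge\phi^{r_k}(\vec s^*_k)$ holds because $\vec s^*_k$ is the minimizer. The second, opposite-direction hypothesis is extracted from the equilibrium property of $\vec s^*$: if the two states differ, some type $i$ with $r_i<r_k$ has agents in different positions, and one argues that $\phi^{r_i}(\vec s^*)\le\phi^{r_i}(\vec s^*_k)$ (else those type-$i$ agents would deviate). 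Lemma~\ref{lemma:ordered_r} with $r_k>r_i\ge 2$ then gives $\phi^{2}(\vec s^*)\le\phi^{2}(\vec s^*_k)$, i.e.\ $SC(\vec s^*)\le SC(\vec s^*_k)$, and the already-established homogeneous bound $\CPoA(\GG,r_k)\le\alpha_k$ finishes Case~I. Cases~II and~III run the symmetric argument through $\vec s^*_j$; in Case~III the comparison level is $r_k$ rather than $2$, and Lemma~\ref{lemma:affine_optimal} supplies the extra factor $r_k/2=\alpha_k$, which is exactly where the product $\alpha_j\cdot\alpha_k$ appears. The choice of comparison state ($\vec s^*_k$ rather than $\vec s_O$) and the device for obtaining the reversed inequality at an intermediate type are the two ideas your outline is missing.
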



\begin{proof} [Proof of Theorem~\ref{th:diverse_max} for $r_j\geq 2$]
Let $\vec s^*$ be an equilibrium of $\GG^*(\vec r)$. 
 Let $\vec s^*_i$ be a state minimizing $\phi^{r_i}(\vec s)$. Note that $\phi^r(\vec s)<\phi^{r'}(\vec s)$ for all $r<r'$ and any $\vec s$.
 By Lemma~\ref{lemma:affine_optimal}, $\phi^2(\vec s)=SC(\GG,\vec s)$ for any $\vec s$. 
We next bound $SC(\GG,\vec s^*)$, dividing into cases: $r_j\geq 2$; $r_k\leq 2$; $r_j < 2 <r_k$.  
We prove for $r_j\geq 2$. 

Consider the state $\vec s^*_k$, which is an equilibrium of the game $\GG^*(r_k)$. If $\phi^{r_k}(\vec s^*)=\phi^{r_k}(\vec s^*_k)$, then $\vec s^*$ is also an equilibrium of $\GG^*(r_k)$. Since $\GG^*(r_k)$ is an NRG, all equilibria have the same social cost, thus 
$$\phi^{2r_k}(\vec s^*) = SC(\GG^*(r_k),\vec s^*) = SC(\GG^*(r_k),\vec s^*_k) = \phi^{2r_k}(\vec s^*_k).$$
As $2r_k > r_k \geq 2$, by Lemma~\ref{lemma:ordered_r} $\phi^{2}(\vec s^*_k) \geq \phi^{2}(\vec s^*)$ (in fact equal).  Thus

\labeq{SC_s_sk}
{~~~~~~~~~~~~~~~SC(\vec s^*) = \phi^2(\vec s^*) \leq \phi^2(\vec s^*_k) = SC(\vec s^*_k).}

Thus suppose that $\phi^{r_k}(\vec s^*_k),\phi^{r_k}(\vec s^*)$ differ. By  definition, $\phi^{r_k}(\vec s^*_k) < \phi^{r_k}(\vec s^*)$.
There must be some (non zero measure of) agents with different actions in both states. It cannot be that all of these agents have uncertainty $r_k$, since the states $\vec s^*,\vec s^*_k$ have a different $\phi^k$ potential.

Consider any such agent of type $i$, $r_i<r_k$, whose action under $\vec s^*$ differs from the one in $\vec s^*_k$.  If $\phi^{r_i}(\vec s^*_k) < \phi^{r_i}(\vec s^*)$ then $\vec s^*$ is not an equilibrium of $\GG^*(\vec r)$, as some agents of type $i$ would deviate.  We conclude that there is at least one type $i$ s.t. $r_i<r_k$, and $\phi^{r_i}(\vec s^*_k) \geq \phi^{r_i}(\vec s^*)$. 

Since $r_k> r_i \geq 2$, and $\phi^{2}(\vec s^*_k) \geq \phi^{2}(\vec s^*)$ by Lemma~\ref{lemma:ordered_r}, we get Eq.~\eqref{eq:SC_s_sk} again. Thus $SC(\vec s^*) \leq SC(\vec s^*_k)$.

By Lemma~\ref{lemma:affine_optimal}, we have that $\CPoA(\GG,\vec r) \leq \CPoA(\GG,r_k) \leq \alpha_k$ (note that $\alpha_j=1$).
\end{proof}

%
Finally, in a RSG a small fraction of the agents with high uncertainty cannot inflict too much damage.
\begin{theorem}\label{th:diverse_small}
Let $\GG$ be an affine RSG. Suppose that $\vec r$ is composed of two types, $r_k>r_j$. Then $\CPoA(\GG,\vec r)\leq \frac{r_j}{2} + \frac{n_k}{n}O(r_k)$.
\end{theorem}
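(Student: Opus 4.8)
The plan is to reduce the problem to the homogeneous-uncertainty case already settled in Proposition~\ref{th:r_geq_2_bound} and Theorem~\ref{th:r_leq_2_bound}. Fix an equilibrium $\vec s^*$ of $\GG^*(\vec r)$; since $\GG$ is an RSG the actions are the edges, and I write $x_e=s^*_{e,j}$ and $y_e=s^*_{e,k}$ for the type-$j$ and type-$k$ mass on edge $e$, so $\ol s^*_e=x_e+y_e$, $\sum_e x_e=n_j$, $\sum_e y_e=n_k$. Equilibrium yields constants $\lambda_j,\lambda_k$ with $r_j a_e\ol s^*_e+b_e\ge\lambda_j$ for every $e$ (equality when $x_e>0$) and $r_k a_e\ol s^*_e+b_e\ge\lambda_k$ for every $e$ (equality when $y_e>0$). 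Two consequences of the parallel-links structure will be used: an edge used by both types would have to sit at the single critical load $a_e\ol s^*_e=(\lambda_k-\lambda_j)/(r_k-r_j)$, so up to one degenerate edge the active edges split, according to whether $b_e$ is below or above a threshold, into a type-$j$ block and a type-$k$ block; and on every edge used by type $k$ one has $a_e\ol s^*_e\le\lambda_k/r_k$ and $b_e\le\lambda_k$.

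To exploit the homogeneous bound for the type-$j$ part I would pass to the auxiliary affine RSG $\GG_x$ on the same links, with costs $\tilde c_e(t)=a_e t+(b_e+r_j a_e y_e)$ and a mass $n_j$ of a single type. A short check shows that the cost seen on edge $e$ in $\GG_x^*(r_j)$ at the state $\vec x$ is $r_j a_e x_e+b_e+r_j a_e y_e=r_j a_e\ol s^*_e+b_e$, so the type-$j$ conditions say precisely that $\vec x$ is an equilibrium of $\GG_x^*(r_j)$. Bounding $OPT(\GG_x)$ by the feasible state $o^j_e:=\tfrac{n_j}{n}\ol o_e$ (the proportional slice of an optimum $\vec o:=OPT(\GG)$) and invoking the homogeneous PoA bound for $\GG_x$ (Proposition~\ref{th:r_geq_2_bound} when $r_j\ge2$, which supplies the leading $r_j/2$; Theorem~\ref{th:r_leq_2_bound} when $r_j<2$) gives $SC(\GG_x,\vec x)\le\beta_j\,SC(\GG,\vec o)+O(r_j)\sum_e a_e y_e o^j_e$ with $\beta_j\le\max\{1,r_j/2\}$. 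The exact expansion $SC(\GG,\vec s^*)=SC(\GG_x,\vec x)+\sum_e\big((2-r_j)a_e x_e y_e+a_e y_e^2+b_e y_e\big)$ turns this into $SC(\GG,\vec s^*)\le\beta_j\,SC(\GG,\vec o)+\Delta$, where every summand of $\Delta$ carries a factor $y_e$ and $\sum_e y_e=n_k$.

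It then remains to bound $\Delta$ by $\tfrac{n_k}{n}O(r_k)\,SC(\GG,\vec o)$. Using $a_e\ol s^*_e\le\lambda_k/r_k$ and $b_e\le\lambda_k$ on the type-$k$ edges and $x_e,y_e\le\ol s^*_e$, the terms $\sum_e a_e y_e^2$, $\sum_e b_e y_e$, and (when $r_j<2$) $(2-r_j)\sum_e a_e x_e y_e$ are each $O(r_k\lambda_k n_k)$; and $\lambda_k n_k$ is controlled by the type-$k$ variational inequality against $\tfrac{n_k}{n}\vec o$: $\lambda_k n_k=\sum_e(r_k a_e\ol s^*_e+b_e)y_e\le\tfrac{n_k}{n}\sum_e(r_k a_e\ol s^*_e+b_e)\ol o_e\le\tfrac{n_k}{n}\big(\tfrac{r_k}{4}SC(\GG,\vec s^*)+(r_k+1)SC(\GG,\vec o)\big)$ by AM--GM. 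Substituting and moving the resulting $O(\tfrac{n_k}{n}r_k)$ multiple of $SC(\GG,\vec s^*)$ to the left-hand side (legitimate exactly in the regime where the claimed bound is non-vacuous) disposes of these terms.

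The hard part will be the one remaining summand, $O(r_j)\sum_e a_e y_e o^j_e$, which measures the interaction between the congestion the type-$k$ agents perceive and the optimal routing. Crude estimates bound it only by something of order $\lambda_k n$ rather than $\lambda_k n_k$, which is too weak; obtaining the $\tfrac{n_k}{n}$ saving requires the RSG threshold structure recorded above---the type-$k$ edges are exactly the active edges with $b_e$ past the threshold and carry only the mass-$n_k$ ``slab''---combined with the per-edge conditions, so that $\sum_e a_e y_e o^j_e$ can be charged against $\tfrac{n_k}{n}$ times the part of $SC(\GG,\vec o)$ carried by OPT's edges. Once this term is controlled, collecting everything gives $SC(\GG,\vec s^*)\le\big(\tfrac{r_j}{2}+\tfrac{n_k}{n}O(r_k)\big)SC(\GG,\vec o)$, that is, $\CPoA(\GG,\vec r)\le\tfrac{r_j}{2}+\tfrac{n_k}{n}O(r_k)$. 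This per-edge/threshold reasoning has no counterpart in general networks, which is why the theorem is restricted to RSGs.
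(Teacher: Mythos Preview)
Your approach is substantially different from the paper's, and the step you yourself flag as ``the hard part'' is a genuine gap that the threshold sketch does not close.

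The paper's argument avoids your auxiliary game entirely. It splits $SC(\GG,\vec s^*)=\sum_e s^*_{j,e}\,c_e(\ol s^*_e)+\sum_e s^*_{k,e}\,c_e(\ol s^*_e)$ and bounds each piece by a \emph{monotonicity} argument peculiar to RSGs. For the type-$j$ piece: convert every agent to type $j$ and run best responses to the homogeneous equilibrium $\vec s^{*j}$. From $\vec s^*$ no mass ever leaves the set $E_j$ of edges used by type-$j$ agents (such a move would already have been an improving deviation for a type-$j$ agent in $\vec s^*$), so loads on $E_j$ only go up and the original type-$j$ agents' cost only increases. Hence $\sum_e s^*_{j,e}\,c_e(\ol s^*_e)\le SC(\GG,\vec s^{*j})\le\tfrac{r_j}{2}\,SC(OPT)$. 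The symmetric conversion for type $k$ gives $\sum_e s^*_{k,e}\,c_e(\ol s^*_e)\le n_k F$, where $F$ is the common perceived cost in $\vec s^{*k}$; then $nF=SC(\GG^*(r_k),\vec s^{*k})\le\tfrac43\,SC(\GG^*(r_k),OPT)\le\tfrac{4r_k}{3}\,SC(\GG,OPT)$, so the type-$k$ piece is at most $\tfrac{n_k}{n}\cdot\tfrac{4r_k}{3}\,SC(OPT)$. No cross-terms ever appear.

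In your route the cross-term $\beta_j r_j\sum_e a_e y_e\,o^j_e=\beta_j r_j\tfrac{n_j}{n}\sum_e a_e y_e\,\ol o_e$ really is the obstacle. The threshold observation only tells you \emph{which} edges carry $y$-mass; it does not bound $\sum_{e:y_e>0}\ol o_e$ by $O(n_k)$, and OPT can route $\Theta(n)$ mass through the type-$k$ edges (think of $c_1(t)=At$, $c_2(t)=B+t$ with $A$ large: type $k$ sits on edge~$2$, and so does almost all of OPT). Any AM--GM or Cauchy--Schwarz splitting of $a_e y_e\,\ol o_e$ yields at best a $\sqrt{n_k/n}$ saving; forcing $n_k/n$ on the $\sum a_e\ol o_e^2$ side transfers an $n/n_k$ factor to the $\sum a_e y_e^2$ side, and your variational-inequality bound on $\lambda_k$ then returns an $O(1)$ multiple of $SC(\vec s^*)$ with no $n_k/n$, so the term cannot be absorbed. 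Two smaller slips compound this: $\beta_j\le\max\{1,r_j/2\}$ is false for $r_j<2$ (Theorem~\ref{th:r_leq_2_bound} only gives $\beta_j\le\tfrac{2+2r_j}{3r_j}$), and the coefficient on the cross-term is $\beta_j r_j$, which is $\Theta(r_j^2)$ when $r_j>2$, not $O(r_j)$. The cleanest repair is to drop the frozen-game detour and use the paper's monotone-conversion decomposition; that is exactly the RSG-specific step that delivers the $\tfrac{n_k}{n}$ factor without any interaction term.
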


%
%
%

\section{Worst-Case Regret}


For what follows, we assume that $r_i=r$ for all $i$. In addition,
we focus on RSGs, as the analysis is non-trivial even for such simple games.

\newpar{Equilibrium and convergence.}

%

In a RSG, the set of edges
$E$ is also the set of actions.  For every resource $e$ in state $\vec s$, we have 
%

$$c^{**}(\{e\},\os)= WCR(e,\os) =  c_e(r \ol s_e) - \min_{d\neq e}c_d(\ol s_d/r).$$

As $\GG$ is an RSG, $\vec s\in EQ(\GG^{**}(r))$ if and only if   $WCR(e,\os)$ is the same for all occupied edges, and at least as high in unoccupied edges.
Denote 
 $MR(\vec s) = \max_{e\in E:\ol s_e>0}WCR(e,\os)$.

%
%
%
Recall that every RSG with player-specific costs has a weak potential~\cite{milchtaich1996congestion}. We show a similar result for an RSG played by WCR players. Our result requires an additional technical property, but 
allows an explicit construction of the potential.
We say that a function $z(t)$ is {\em $r$-convex} if $z'(t/r)/r \leq r\cdot z'(r\cdot t)$ for all $t$, where $c'$ is the derivative of $c$. We note that $r$-convexity holds for convex and other commonly used functions (see Appendix~\ref{apx:regret}).
\begin{proposition}\label{th:WCR_potential} Consider an RSG $\GG$, and suppose that all cost functions are $r$-convex. Then $MR(\vec s)$ is a weak potential function of $\GG^{**}(r)$. 
Then if there are WCR moves, there is a WCR move that reduces $MR(\vec s)$.
\end{proposition}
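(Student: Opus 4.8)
The goal is to show that $MR(\vec s)$ is a weak potential for $\GG^{**}(r)$: at any non-equilibrium state there exists a WCR-improving move that also decreases $MR$. First I would set up notation: fix a state $\vec s$ that is not an equilibrium, let $e^\star \in \argmax_{e:\ol s_e>0} WCR(e,\os)$ be a resource currently realizing the maximal worst-case regret, and let $M = MR(\vec s) = WCR(e^\star,\os)$. Since $\vec s$ is not an equilibrium, either some occupied resource has strictly smaller WCR than $M$ (so agents on $e^\star$ want to move), or some empty resource has WCR below $M$. In both cases I want to exhibit an infinitesimal mass of agents leaving $e^\star$ for a target resource $d$ with $WCR(d,\os) < M$, and argue that after this move the new maximal regret is still below $M$.

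The key technical point is to control how $WCR$ changes at the two affected resources $e^\star$ and $d$ when we shift an infinitesimal mass $\delta$. Recall $WCR(e,\os) = c_e(r\ol s_e) - \min_{d'\neq e} c_d'(\ol s_{d'}/r)$. When mass leaves $e^\star$, its load drops, so $c_{e^\star}(r\ol s_{e^\star})$ decreases; the subtracted term $\min_{d'\neq e^\star} c_{d'}(\ol s_{d'}/r)$ can only increase weakly (one load grows, namely $d$'s), so $WCR(e^\star,\cdot)$ strictly decreases. At the target $d$, the load grows, so $c_d(r\ol s_d)$ increases; but I also need to check the effect on the subtracted minimum over $d'\neq d$, which could now include $e^\star$ at its reduced load $\ol s_{e^\star}/r$ — this can only decrease that minimum weakly, pushing $WCR(d,\cdot)$ up further. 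So the real work is to verify that, for a sufficiently small $\delta$, $WCR(d,\cdot)$ after the move is still $\le M$, and more generally that no other resource's $WCR$ rises above $M$. For resources other than $e^\star$ and $d$, the direct term $c_{e}(r\ol s_e)$ is unchanged; only the subtracted $\min$ can change, and it changes by an $O(\delta)$ amount — here is exactly where $r$-convexity enters, bounding the rate of change of $c_{e}(\cdot/r)$ at the shrunk resource $e^\star$ against the rate of growth of $c_d(r\cdot)$, so that the net first-order effect keeps all regrets strictly below $M$ for small $\delta$. I would make this quantitative: pick $d$ to be a resource with $WCR(d,\os) = MR'(\vec s) := \min_{e:\ol s_e>0}WCR(e,\os)$ or an empty resource with small $WCR$, so there is a strictly positive gap $M - WCR(d,\os)$ to spend, and choose $\delta$ small enough that all first-order perturbations are dominated by this gap.

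The main obstacle I anticipate is the non-edge-separable nature of $c^{**}$: unlike in the WCC case, moving agents between $e^\star$ and $d$ perturbs the regret of \emph{every} resource through the $\min_{d'\neq e}$ term, so I cannot argue resource-by-resource in isolation. The clean way around this is to observe that the $\min$-term at any resource $e$ is taken over loads $\ol s_{d'}/r$, and the only loads that change are $\ol s_{e^\star}$ (down) and $\ol s_d$ (up); $\ol s_d$ going up can only \emph{raise} a $\min$, which helps, and $\ol s_{e^\star}$ going down can only \emph{lower} a $\min$ at resources $e \neq e^\star$, which hurts but is controlled by $r$-convexity relating $c'_{e^\star}(\ol s_{e^\star}/r)/r$ to $r\,c'_{e^\star}(r\ol s_{e^\star})$ — the latter governing how fast $WCR(e^\star,\cdot)$ itself drops. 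Balancing these two rates is precisely what the $r$-convexity hypothesis is designed to do, and assembling the inequalities carefully to conclude $MR(\vec s') < MR(\vec s)$ for the perturbed state $\vec s'$ is the crux; the rest is a routine continuity/small-$\delta$ argument, and equilibrium existence then follows since $MR$ is bounded below and decreases along such moves.
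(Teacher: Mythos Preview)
Your plan has a real gap: it moves mass only from a \emph{single} maximizer $e^\star$, but the argmax set $X=\{e:\ol s_e>0,\ WCR(e,\os)=M\}$ may contain several resources. Take any $e'\in X\setminus\{e^\star\}$. After shifting $\delta$ from $e^\star$ to $d$, the direct term $c_{e'}(r\ol s_{e'})$ is unchanged, while the subtracted term $\min_{d'\neq e'}c_{d'}(\ol s_{d'}/r)$ either stays the same (if neither $e^\star$ nor $d$ is the minimizer for $e'$) or \emph{decreases} (if $e^\star$ was that minimizer, since $\ol s_{e^\star}$ just dropped). In the first case $WCR(e',\cdot)$ remains exactly $M$; in the second it rises above $M$. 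Either way $MR$ does not strictly decrease. The ``strictly positive gap $M-WCR(d,\os)$'' you invoke exists only at the target $d$, not at the other maximizers $e'$, so there is nothing to absorb the perturbation there.

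The paper's proof deals with this by moving mass from \emph{all} of $X$ at once, and the destination and relative amounts are chosen via a case split on whether the minimizing resource $w\in\argmin_e c_e(\ol s_e/r)$ (and the second minimizer $w'$) belong to $X$. When $w\notin X$ the move is to the min-WCR resource $y$; when $w\in X$ but $w'\notin X$ the move is to $w'$; and only in the hardest case $w,w'\in X$ is $r$-convexity actually used, to guarantee that one can pick per-resource proportions $\delta_e$ so that every $e\in X$ sees a strict first-order decrease in $WCR$. Note also that $r$-convexity relates $c'_e(t/r)/r$ to $r\,c'_e(rt)$ for the \emph{same} function $c_e$; it does not, as you suggest, compare $c'_{e^\star}$ against $c'_d$ for two different resources.
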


\dcp{right?} \rmr{we already know en equilibrium exists in any nonatomic game}

\newpar{Equilibrium quality for Pigou instances.}

In the special case of the family of Pigou instances, we have:
%

\begin{proposition}
\label{th:upper_WCR_ge2} Let $q^{**}_P(r) = \max_{\GG_P} \RPoA(\GG_P,r)$.
\begin{enumerate}
	\item[(1)] For any $r \!\in\! [1,2\!+\!\sqrt{3}]$, we have $q^{**}_P(\!r\!)= \frac{16}{8(r\!+\!\frac{1}{r}) - (1\!+\!\frac{1}{r})^2}$.
\item[(2)] For any $r \geq 2+\sqrt{3}$, we have $q^{**}_P(r) =  \frac{(r+\frac{1}{r})^2}{8(r+\frac{1}{r})-16)}$.
\end{enumerate}
\end{proposition}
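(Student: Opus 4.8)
The plan is to reduce the optimization over all Pigou instances $\GG_P(a_2,n)$ to a one-parameter family and then optimize. Recall $\GG_P(a_2,n)$ has two parallel resources, $c_1(t)=1$ and $c_2(t)=a_2 t$, with total mass $n$. First I would compute, for a given load split $\os = (\ol s_1, \ol s_2)$ with $\ol s_1 + \ol s_2 = n$, the worst-case regret of each resource under uncertainty $r$. On resource $2$: its worst-case load is $r\ol s_2$ and the best alternative (resource $1$) has best-case cost $1$ (constant), so $WCR(2,\os) = a_2 r \ol s_2 - 1$ when $a_2 r \ol s_2 \geq 1$. On resource $1$: its cost is $1$ regardless of load, and the best-case cost of resource $2$ is $a_2 \ol s_2 / r$, so $WCR(1,\os) = 1 - a_2 \ol s_2 / r$ when that is positive. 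A WCR equilibrium with both resources used requires $WCR(1,\os) = WCR(2,\os)$, giving the equation $1 - a_2 \ol s_2/r = a_2 r \ol s_2 - 1$, i.e. $a_2 \ol s_2 (r + 1/r) = 2$, so in equilibrium $a_2 \ol s_2 = 2/(r+1/r)$. (One must also handle the boundary cases where all mass is on one resource, checking the unoccupied-edge condition; these turn out not to be the worst case for $r$ in the stated ranges.)

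Next I would write down the social cost at this equilibrium and at the optimum. At equilibrium, the realized cost on resource $2$ is $a_2 \ol s_2 = 2/(r+1/r)$ per unit, carried by mass $\ol s_2$, and the cost on resource $1$ is $1$ per unit, carried by mass $n - \ol s_2$; so $SC(\text{eq}) = a_2 (\ol s_2)^2 + (n - \ol s_2) = \ol s_2 \cdot \frac{2}{r+1/r} + n - \ol s_2$. The true optimum of $\GG_P(a_2,n)$ is the standard Pigou optimum: minimize $a_2 x^2 + (n-x)$ over $x$, giving $x_O = 1/(2a_2)$ (or $x_O = n$ if $n \le 1/(2a_2)$) and $SC(OPT) = n - 1/(4a_2)$ in the interior case. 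Substituting $a_2 = \frac{2}{\ol s_2(r+1/r)}$ to eliminate $a_2$, and then normalizing — e.g. fixing $n=1$ and letting $\ol s_2$ (equivalently $a_2$) range over the feasible interval — reduces $\RPoA(\GG_P,r)$ to an explicit function of the single variable $\ol s_2$ (and the parameter $r$). I would then maximize this ratio over $\ol s_2$ by calculus. I expect the maximizer to be an interior point for small $r$, producing the expression $\frac{16}{8(r+\frac1r)-(1+\frac1r)^2}$ in part (1), and to hit a boundary constraint (e.g. where $x_O$ moves to $n$, or where one of the $WCR$ quantities changes sign) at $r = 2+\sqrt 3$, after which the optimizing instance is the constrained one and the formula switches to $\frac{(r+\frac1r)^2}{8(r+\frac1r)-16}$ in part (2). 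The threshold $2+\sqrt3$ should emerge as the value of $r$ at which the two regimes give equal PoA, or equivalently where the unconstrained maximizer leaves the feasible region.

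The main obstacle I anticipate is the bookkeeping of cases and feasibility constraints rather than any deep idea: one must carefully track (i) which resources are occupied in equilibrium, (ii) the sign conditions making the $WCR$ formulas valid ($a_2 r \ol s_2 \geq 1$ and $a_2 \ol s_2 / r \leq 1$), (iii) whether the social optimum is interior or puts all mass on resource $2$, and (iv) that among all these branches the claimed expression is genuinely the maximum over the whole family $\GG_P$, not just a critical point within one branch. Verifying that the transition happens exactly at $2+\sqrt3$ and that no degenerate instance (e.g. with $a_2$ tiny or huge) does better will require checking the limiting behavior at the endpoints of the feasible interval for $\ol s_2$. Once the correct branch is identified, the optimization itself is a routine single-variable calculus computation, and matching the two pieces at $r = 2+\sqrt3$ confirms continuity of $q^{**}_P(r)$.
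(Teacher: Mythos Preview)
Your proposal is correct and follows essentially the same route as the paper: derive the WCR equilibrium from $a_2\,\ol s_2(r+1/r)=2$, write $\RPoA$ as the ratio of equilibrium to optimal social cost, and maximize over the instance parameters by case analysis (the paper parametrizes by $x=1/(na)$ rather than fixing $n=1$, but this is equivalent). One small reversal of your expectation: the threshold $r=2+\sqrt3$ is simply where $r+1/r=4$, i.e.\ where the equilibrium load $2/(a(r+1/r))$ coincides with the optimal load $1/(2a)$, and in fact for $r\le 2+\sqrt3$ the worst instance sits at the boundary $an=2/(r+1/r)$ (where $\ol s^*_1$ hits $0$), while for $r\ge 2+\sqrt3$ the maximizer is an interior critical point within the branch where the optimum has $\ol s'_1=0$.
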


Thus $\RPoA(\GG_P,r)$ for $r\geq 2+\sqrt{3}$ is an increasing function in $r$,
and asymptoting to $r/8 + o(r)$ (see Figure~\ref{fig:WCC_bounds}). 
It remains an open question to derive an upper bound for games over more complex networks.

\section{Discussion}

%

\paragraph{Related work.}
We focus on previous work on uncertainty (especially strict uncertainty) 
in congestion games (CG).
%
Strict uncertainty  
been considered by Ashlagi et al.~\shortcite{ashlagi2006resource,ashlagi2009two}. They analyze ``safety-level'' strategies (similar to WCC) for agents who do not know the total number of players $k$, but only an upper bound 
on $k$, and focus on proving the existence of a symmetric mixed equilibrium. For atomic RSGs, they show that (as in our case) uncertainty improves the social welfare. 
Both the the analysis techniques and the reasoning required from agents in  their setting are
quite complex, despite the focus on a simple class of games, whereas in our case the game with WCC behavior reduces to a modified (player-specific) congestion game.

The next two papers are closer to our approach, where agents react to some noisy variation of the current state.
Meir et al.~\shortcite{MTBK12} study the effect of Bayesian uncertainty 
due to agents who may
fail to play with some probability (in the spirit of trembling-hand perfection). They focus mainly on RSGs and show that the PoA generally improves if failure probabilities are negligible, but not if they are bounded from $0$. 
	 Angelidakis et al.~\shortcite{angelidakis2013stochastic}
study a related model of RSGs with 
agents who react based on a  quantile of the cost distribution. We further discuss these two papers below.



Piliouras et al.~\shortcite{piliouras2013risk} also study CGs with
strict uncertainty with motivation similar to ours, and look at a wide range of decision-making
approaches including WCC and WCR. However,  agents in their model know
the actions of others exactly, and uncertainty over costs stems from  the
unknown \emph{order} in which players arrive. Our model is more direct, and
closer to the traditional view of congestion games.
  %
	%
%
%

Babaioff et al.~\shortcite{babaioff2007congestion}  study the effect of
introducing a small fraction $v$ of malicious agents.
This is related to our WCC model, where agents behave \emph{as
  if} the load on each edge is increased by $v=(r-1)\ol s_e$ additional
(malicious) agents. However, in
our model this added load only affects the behavior of  real agents, and does not affect the outcome directly.
Babaioff et al.~\shortcite{babaioff2007congestion}
 observe that adding some amount
of malicious agents may decrease the social
cost in equilibrium; see also followup work \cite{blum2008regret,roth2008price}.

Lastly, Halpern and Pass~\shortcite{halpern2012iterated} suggested \emph{iterated regret minimization} as an explanation to human behavior in many games. We emphasize that in contrast to our model, such behavior requires agents to explicitly reason about the \emph{incentives} of  other agents.

\newpar{Distance-based uncertainty.}

Our epistemic model adapts the multiplicative
distance-based uncertainty model, initially introduced in the context of voting theory~\cite{MLR14,Meir15}. There, the state was the number of votes
for each candidate, analogous to the measure of agents on each
resource in the present setting. 
While the epistemic model in both papers is derived from a similar approach, both the behavioral heuristics and the techniques for analysis are quite different. For example in voting games payoffs are highly discontinuous, so it is not {\em a priori} clear whether pure equilibria exist. 


There may appear to be a contradiction between our notion of uncertainty and equilibrium play: if agents converge to a particular state and play it repeatedly, then after a while we might expect them to be certain about this state. However even in ``standard'' congestion game the state (action profile) is only an abstraction of reality, where there is noise from various sources-- from players' actions and failures, to varying costs. Thus an equilibrium is a fixed point in the abstract model, although in reality there remain fluctuations around the equilibrium point. Thus even in equilibrium there may be some uncertainty about the exact loads. 

Some papers model the underlying distribution explicitly (e.g.,\cite{MTBK12,angelidakis2013stochastic}), and assume a belief structure that is derived from this distribution. 
Such an approach does not necessarily provide a better description of the way human players perceive the game. In our model we avoid such an explicit description, and instead use $\vec s$ (as an abstraction of the current state) and derive agents' beliefs directly from this state using the distance metric. These beliefs may or may not be consistent with the ``real'' underlying distribution, which may be highly complex. This simple belief structure  allows us to derive PoA bounds on a much wider class of games. 


Distance-based uncertainty can also be derived from a statistical viewpoint.
Suppose that an agent believes that the actual load is distributed around the reference load $\ol s_e$. A simple heuristic  considers a confidence interval around $\ol s_e$, with the size of the interval
modeled through $r_i$.   Under this interpretation, $r_i$ is higher for agent
types that are either more risk-averse or less-informed.\footnote{\label{fn:noise}For example, if the noise on edge $e$ is a normal distribution with standard deviation $\sigma$ and mean $0$, and an agent of type $i$ requires confidence of $95\%$ (roughly two standard deviations), then this translates to a strict uncertainty interval of $[\ol s_e-2\sigma,\ol s_e+2\sigma]$ 
} 
A crucial point is that if agents act independently
 the actual congestion would be highly concentrated around its expectation (that is, $r_i$ should approach $1$ as the size of the population grows), and for nonatomic agents there should be no uncertainty at all. However, experimental work in behavioral decision making suggests people perceive uncertainty over quantities as if the standard deviation is proportional to the expectation, even when this is false~\cite{kahneman1974subjective,tversky1974judgment}.\footnote{\label{fn:KT}The most famous example is an experiment where subjects are told the average number of girls born daily in a hospital is $s$. People believe that the probability that on a given day the number is within $[(1-r')s,(1+r')s]$ is fixed and does not depend on $s$. Kahneman and Tversky~\shortcite{kahneman1974subjective} highlight the contrast with standard, 
statistical analysis, where the range $r'$ is proportional to $1/\sqrt s$.} 

While in our model the uncertainty is over the actions of the other agents, an alternative way is to present it as strict uncertainty over the agent's own cost function, also known as Knightian Uncertainty~\cite{Knight21}. Chiesa et al.~\shortcite{chiesa2014knightian} have recently applied Knightian uncertainty to auctions, where an agent may have a valuation for an item, but only be aware of some \emph{interval} in which this valuation resides. Further studying the conceptual and technical connection between distance-based uncertainty and Knightian uncertainty may help to gain better understanding of both concepts.

	\subsection{Conclusions} 

	%
Game-theoretic models
        should explain and predict the behavior of
        players in games. 
\if 0
To that end,
it is standard 

 we make explicit and implicit
        assumptions on what players \emph{know} or \emph{believe}, as
        well as on their \emph{behavior} given their knowledge. For
        example, that agents hold some common prior on other agents'
        types and that they are trying to maximized their expected
        utility.
\fi
Merely adding to such models uncertainty about the environment is insufficient, since as Simon~\shortcite{simon57} wrote: ``{\em ...the state of information may as well be regarded as a characteristic of the decision-maker as a characteristic of his environment}.''
Indeed, psychological studies suggest that people
        are both risk-averse and avoid probabilistic calculations~\cite{tversky1974judgment,slovic1980facts},
and raise concerns with standard
models of rationality.  
We
        believe that our model  captures these behavioral
        assumptions, and that it is simpler than other approaches for
        uncertainty representation.
In addition, we show that the model
        still permits the use of
standard game-theoretic tools such as
        equilibrium analysis. 
The model is flexible, and variations of the belief structure (the distance metric) can be easily made. One limitation of our approach is that distance based uncertainty (an interval) cannot capture bi-modal scenarios. For example, when congestion is usually mild, but in rare cases (an accident) congestion is very high.  This is similar to the reason that a normal distribution cannot capture such a scenario. 

Our results show that in a risk-averse population, an intermediate level of uncertainty helps to align the incentives of the agents with those of the society. This message is emphasized by showing similar results for two different interpretation of risk-aversion, and is consistent with findings from other models of uncertainty (see Related Work). 

Finally, lab experiments with routing games show that human subjects converge to states that are close to, but do not coincide with the  Nash equilibria~\cite{rapoport2009choice} (especially when looking at individual behavior rather than aggregate congestion).  
We believe that these discrepancies might be at least partially due to bounded-rational behavior of the agents, who may be risk-averse and/or applying simple variations of best-response heuristics. Thus our model might be able to better explain observed outcomes in congestion games.  More empirical and experimental work is required in that respect.
%
%
\dcp{this comment about explaining discrepancies made me thinking about
halpern and pass work on iterated regret minimization. might be worth a cite
somewhere, e.g. in a footnote. good defensive thing to do, i think.}



			
	


\bibliography{uncertainty}
\newpage
\title{Online appendices for paper \#21}
\maketitle

\onecolumn
\renewcommand{\thesection}{\Roman{section}}

\section{Proofs}

\begin{rproposition}{th:example_cycles}
There is an RSG $\GG$ with 3 resources, and a vector $\vec r$ s.t. $\GG^*(\vec r)$ contains a cycle of (infinitesimal) best responses.
\end{rproposition}
\begin{proof}
We first present the example constructed by Milchtaich~\shortcite{milchtaich1996congestion}.
There are three resources, $x,y,z$, and three atomic agents.
We denote by $\vec a=(e_1,e_2,e_3)\in \{x,y,z\}^3$ the actions of the three agents at a given state. 
Milchtaich constructs three cost functions for every resource such that 
\begin{align*}
c^*_{1,z}(1) < c^*_{1,y}(1); c^*_{1,y}(2) < c^*_{1,z}(2) \\
c^*_{2,z}(2) < c^*_{2,x}(2); c^*_{2,x}(1) < c^*_{2,z}(1) \\
c^*_{3,y}(1) < c^*_{3,x}(1); c^*_{3,x}(2) < c^*_{3,y}(2),
\end{align*}
and shows that this leads to a cycle of best-responses through the profiles $(y,x,x)\step{} (z,x,x)\step{} (z,z,x)\step{} (z,z,y)\step{} (y,z,y) \step{} (y,x,y)$ and back to $(y,x,x)$.
 
Our (nonatomic) RSG $\GG$ has 3 types of agents, each with a mass of one unit.
 Let $r_1=1,r_2=r_3=10$. Set cost functions on the three edges $E=\{x,y,z\}$ so that 
	$$c_z(1) < c_x(1)<c_y(1) <c_y(2)<c_z(2) < c_y(10)<c_x(10)<c_z(10)<c_z(20)<c_x(20)<c_y(20).$$
	Since the only constraint is that each $c_e$ is monotone, this is always possible.
	We observe that 
	\begin{align*}
c^*_{1,z}(1) = c_z(1) < c_y(1) = c^*_{1,y}(1); c^*_{1,y}(2)  = c_y(2) < c_z(2) = c^*_{1,z}(2) \\
c^*_{2,z}(2)  = c_z(20) <  c_x(20) =c^*_{2,x}(2); c^*_{2,x}(1) = c_x(10) < c_z(10)= c^*_{2,z}(1) \\
c^*_{3,y}(1) = c_y(10) < c_x(10) =c^*_{3,x}(1); c^*_{3,x}(2) = c_x(20) < c_y(20) = c^*_{3,y}(2).
\end{align*}
Denote by $(e_1,e_2,e_3)$ the state where \emph{all} type~$i$ agents use resource $e_i\in\{x,y,z\}$.
We get a similar cycle when we start from the state $\vec a^0 =(y,x,x)$. First, all type~1 agents move to $z$, since $c^*_{1,z}(t)<c^*_{1,y}(1-t)$ for all $t\leq 1$, so we get state $\vec a^1 = (z,x,x)$. Then all type~2 agents move from $x$ to $y$, and so on. 
 
We need not assume that all agents of the same type move as one, that is, we can get from $\vec a^0$ to $\vec a^1$ is a sequence of steps, as long as only type~1 agents move. 
	\end{proof}

	\begin{rlemma}{lemma:affine_optimal}
Let $\GG$ be an affine NRG, and suppose that $r\geq 2$. Then  for all $\vec s$,
	 $\phi^r(\vec s) \in [SC(\GG,\vec s),\frac{r}{2}SC(\GG,\vec s)]$.
\end{rlemma}

\begin{proof}  
\begin{align*}
\phi^r(\vec s) = \phi(\GG^*(r),\vec s)  &= \sum_e \int_{t=0}^{\ol s_e}c^*_e(t) dt =  \sum_e \int_{t=0}^{\ol s_e}c_e(r \cdot t) dt \\
&= \sum_e \int_{t=0}^{\ol s_e}(a_e \cdot r \cdot t + b_e ) dt 
 =\sum_e [\frac{r}{2} a_e \cdot t^2 + b_e t]_{t=0}^{\ol s_e}  \\
& =\sum_e \ol s_e (\frac{r}{2} a_e \cdot \ol s_e + b_e).
\end{align*}
Now, if $r\geq 2$, then  $\phi^r(\vec s) = \sum_e \ol s_e (\frac{r}{2} a_e \cdot \ol s_e + b_e) \geq \sum_e (\ol s_e a_e \cdot \ol s_e + b_e)=SC(\GG,\vec s)$; and 
$$\phi^r(\vec s) =\sum_e \ol s_e (\frac{r}{2} a_e \cdot \ol s_e + b_e) \leq \frac{r}{2}\sum_e \ol s_e ( a_e \cdot \ol s_e + b_e) = \frac{r}{2}SC(\GG,\vec s).$$
\end{proof}

\begin{rtheorem}{th:r_leq_2_bound}
For $r\in [1,2]$, and for any affine NRG $\GG$, $\CPoA(\GG,r) \leq 2-\frac{2}{r} + (\frac{2}{r}-1)\PoA(\GG)$. Taking the worst upper bound over all affine games, we get $\CPoA(\GG,r) \leq \frac{2+2r}{3r}$.
\end{rtheorem} 

\rmr{ The place where the analysis can be tightened is where we rely on convexity of $SC$ instead of writing it explicitly (note: It should be shown that in OPT at least some of the cost is due to the linear part $a_e$).} 

\begin{proof}
Let $\vec s'$ and $\vec s''$ be equilibrium points of $\GG$ and $\GG^*(2)$, respectively. 
Recall that $\vec s' = \argmin_{\vec s}\phi(\GG,\vec s)$, i.e., it  minimizes the potential function of $\GG$ over all real vectors $\vec s$, subject to some feasibility constraints. 

Taking the derivative of the potential function $\phi(\GG,\vec s)$ w.r.t. $\ol s_e$, we get 
$$g_e(\vec s)=\frac{\partial \phi(\GG,\vec s)}{\partial \ol s_e} = a_e \ol s_e + b_e.$$

Similarly, for a game $\GG^*(r)$, we get that 
$$h^{r}_e(\vec s)= \frac{\partial \phi^r(\GG,\vec s)}{\partial \ol s_e} = r \cdot a_e \ol s_e + b_e, $$
and in particular $h^2_e(\vec s) = 2a_e \ol s_e + b_e $.

We define $\vec s^* = \beta \vec s'' + (1-\beta)\vec s'$, where $\beta = \frac{2r-2}{r}$ (in particular $s^*_e = \beta \ol s''_e + (1-\beta)\ol s'_e$ for all $e$). Thus for $r=1,r=2$ we get $\vec s^*=\vec s'$ and $\vec s^*=\vec s''$, respectively. 
We claim that $\vec s^*$ is an equilibrium of $\GG^*(r)$. 
As for feasibility, since $\vec s^*$ is the convex combination of two valid states, and all feasibility constraints are linear, $\vec s^*$ is also a feasible state.

We next show that for $r\in[1,2]$, $h^r_e(s^*_e)=0$ for all $e$. That is, that $\vec s^*$ is the minimum of $\phi^r(\vec s)$ (and thus an equilibrium of $\GG^*(r)$. 
\begin{align*}
h^{r}_e(\vec s^*)&= r \cdot a_e s^*_e +  b_e \\
&= r \cdot a_e  (\beta \ol s''_e + (1-\beta) \ol s'_e)+  b_e  \\
&= r \cdot a_e  (\beta \ol s''_e + (1-\beta) \ol s'_e)+  (b_e)(r-1) + (b_e)(2-r)\\
&= \beta r \cdot a_e   \ol s''_e + (b_e + d_e)(r-1) + (1-\beta)  r \cdot a_e  \ol  s'_e +  (b_e)(2-r)\\
 &=  \frac{2r-2}{r} r \cdot a_e  \ol s''_e + (b_e)(r-1) + \frac{2-r}{r}  r \cdot a_e \ol  s'_e +  (b_e )(2-r)\\
 &=  2(r-1)  a_e \ol  s''_e + (b_e)(r-1) + (2-r) \cdot a_e  \ol s'_e +  (b_e)(2-r)\\
 &=  (r-1) (2 a_e  \ol s''_e + b_e) + (2-r)( a_e  \ol s'_e +  b_e)\\
& = (r-1)h^2_e(\ol s''_e) + (2-r)g_e(\ol s'_e) = 0+0 = 0.
\end{align*}
We next bound the social cost at $\vec s^*$:
\begin{align*}
SC(\GG,\vec s^*) &= SC(\GG,\beta \vec s'' + (1-\beta) \vec s') \\
&\leq \beta SC(\GG,\vec s'') +(1-\beta)SC(\GG,\vec s') \tag{convexity of $SC$}\\
&=\beta OPT(\GG) +(1-\beta)SC(\GG,\vec s') \tag{$\vec s''$ is optimal in $\GG$}\\
&=\frac{2r-2}{r}OPT(\GG) + \frac{2-r}{r}OPT(\GG)\PoA(\GG) \\
&= \left(\frac{2r-2}{r} + \frac{2-r}{r}\PoA(\GG)\right) OPT(\GG),
\end{align*}
thus $\CPoA(\GG,r) \leq  \frac{2r-2}{r} + \frac{2-r}{r}\PoA(\GG)$.

Finally, since $\PoA(\GG)\leq \frac43$ for any affine game, 
$$\CPoA(\GG,r) \leq \frac{2r-2}{r} + \frac{2-r}{r}\frac{4}{3} = \frac{3(2r-2)+(2-r)4}{3r} = \frac{2+2r}{3r},$$
which concludes the proof.
\end{proof}

\begin{proposition}
\label{th:upper_RSG_ge2}
For any Pigou instance $\GG_P(a_2,n)$, and any $r\geq 2$, we have $\CPoA(\GG,r) \leq  \frac{r^2}{4(r-1)}$, and this bound is tight.
\end{proposition}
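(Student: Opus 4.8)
The plan is to write the WCC equilibrium and the social optimum of the Pigou instance in closed form and then maximize the resulting ratio by elementary calculus, so I begin with a normalization. Scaling the total mass to $n=1$ and letting $a:=a_2n$ be the effective parameter leaves every equilibrium and optimum fraction, and hence $\CPoA$, unchanged, so it suffices to analyze $\GG_P(a,1)$; here $SC(\GG,\vec s)=(1-y)+ay^2$, where $y\in[0,1]$ is the mass on resource $2$. In $\GG^*(r)$ the modified costs are $c^*_1\equiv 1$ and $c^*_2(t)=art$, so $\GG^*(r)$ is again a Pigou instance, with unique equilibrium load $y^*=\min\{1,\tfrac{1}{ar}\}$ (resource $2$ is always used since $c^*_2(0)=0<1$). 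Substituting $y^*$ gives $SC(\GG,\vec s^*)=1-\tfrac{r-1}{ar^2}$ when $ar\ge 1$ and $=a$ when $ar<1$, while minimizing $(1-y)+ay^2$ over $[0,1]$ gives $SC(\GG,OPT)=1-\tfrac{1}{4a}$ when $a\ge\tfrac12$ and $=a$ when $a<\tfrac12$.

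Next I split the range of $a$ at the two thresholds $\tfrac1r\le\tfrac12$ (the inequality is $r\ge 2$), giving three regimes. If $a<\tfrac1r$ then both the equilibrium and the optimum put all mass on resource $2$, so $\CPoA=1\le\tfrac{r^2}{4(r-1)}$, the last step being $(r-2)^2\ge 0$. If $\tfrac1r\le a<\tfrac12$, the substitution $u:=1/a\in(2,r]$ turns $\CPoA$ into the concave quadratic $u-\tfrac{r-1}{r^2}u^2$, whose vertex $u^\ast=\tfrac{r^2}{2(r-1)}$ lies in $(2,r]$ precisely because $r\ge 2$ (the two conditions $u^\ast\le r$ and $u^\ast>2$ are $r(r-2)\ge 0$ and $(r-2)^2>0$), and the value there is exactly $\tfrac{r^2}{4(r-1)}$; this simultaneously bounds $\CPoA$ in this regime and identifies the extremal instance $a=\tfrac{2(r-1)}{r^2}$, i.e.\ $a_2=\tfrac{2(r-1)}{r^2 n}$. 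If $a\ge\tfrac12$, the substitution $u:=1/a\in(0,2]$ turns $\CPoA$ into $g(u):=\bigl(1-\tfrac{r-1}{r^2}u\bigr)/(1-u/4)$, and a short computation gives $g'(u)=\bigl(\tfrac14-\tfrac{r-1}{r^2}\bigr)/(1-u/4)^2$ with $\tfrac14-\tfrac{r-1}{r^2}=\tfrac{(r-2)^2}{4r^2}\ge 0$, so $g$ is nondecreasing and maximized at $u=2$, where $g(2)=\tfrac{2(r^2-2r+2)}{r^2}$; it then remains to check $g(2)\le\tfrac{r^2}{4(r-1)}$, which rearranges to $r^4-8(r-1)(r^2-2r+2)=(r-2)^4\ge 0$.

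Combining the three regimes yields $\CPoA(\GG_P(a_2,n),r)\le\tfrac{r^2}{4(r-1)}$ for all $r\ge 2$, and the middle regime already produces an instance at which equality holds, so the bound is tight. All of the computation is routine; the points needing a little attention are the boundary cases of the equilibrium and the optimum (all traffic on one link) and the verification that the vertex of the middle-regime parabola really lies in $(2,r]$ --- each of which, conveniently, reduces to an inequality of the form $(r-2)^k\ge 0$. I expect the case bookkeeping across the three regimes, rather than any individual step, to be the main thing to keep straight.
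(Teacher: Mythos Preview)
Your proof is correct and follows essentially the same approach as the paper: compute the WCC equilibrium and the social optimum of the Pigou instance in closed form, split into cases according to whether each of them is interior, and maximize the resulting ratio over the free parameter. Your parameterization (normalizing $n=1$ and working with $u=1/a$) is equivalent to the paper's substitution $x=1/(nar)$; your regime~2 is the paper's first case, and your regime~3 is its second. If anything, your regime~3 argument is tidier: you verify that $g(u)$ is nondecreasing via $g'(u)=\tfrac{(r-2)^2}{4r^2(1-u/4)^2}\ge 0$, whereas the paper simply plugs in the lower bound on $n$ without justifying that this yields an upper bound on the ratio. Your tightness instance $a=\tfrac{2(r-1)}{r^2}$ is the correct one (the paper's stated value $an=\tfrac{1}{r-1}$ appears to be a slip).
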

\begin{proof}
We denote $a=a_2$, and $a'=ar$. We denote the optimal state by $\vec s'$ (which is the equilibrium for $r'=2$). For $r>2$ we always have $s^*_1\geq s'_1$ as the cost of $e_2$ has more effect on agents with higher uncertainty.

The game $\GG^*(r)$ has a unique equilibrium, where either one of the resources has no agents, or $a_2 r \ol s_2 = c_2(r \ol s_2) = c_1(r \ol s_1) = 1$. Also, it is easy to check that $\ol s^*_2>0$, and that if $\ol s^*_1=0$, then $\vec s^*=\vec a'$ and is thus optimal. Thus $\ol s^*_2 = \frac{1}{ar}, \ol s^*_1 = n-\frac{1}{ar}$  The social welfare in $\vec s^*$ can be written as 
$$SC(\vec s^*) = (n-\frac{1}{ar})1 + \frac{1}{ar} a \frac{1}{ar} = n-\frac{1}{ar}+\frac{1}{ar^2}.$$

Suppose first that $\ol s'_1=0$. Thus $\ol s'_2 = n$, and $SC(\vec s') = n^2a$. 
Thus 
\begin{align*}
\CPoA(\GG,r) &= \frac{SC(\vec s^*)}{SC(\vec s')} = \frac{n-\frac{1}{ar}+\frac{1}{ar^2}}{n^2a} = r\frac{n-\frac{1}{a'}+\frac{1}{a'r}}{n^2a'} \\
& = r(\frac{1}{na'} - \frac{1}{(na')^2} + \frac{1}{(na')^2 r})  \tag{denote $x=\frac{1}{na'}$}\\
& = r(x - x^2 + x^2/r) = r(x-x^2\beta), \tag{for $\beta = 1-\frac{1}{r}$}
\end{align*}
The last expression is maximized for $x = \frac{1}{2\beta}$, thus
\begin{align*}
\CPoA(\GG,r) &\leq r(\frac{1}{2\beta} - \frac{1}{4\beta}) = \frac{r}{4\beta} = \frac{r}{4(1-\frac{1}{r})}
\end{align*}

Next, suppose that $\ol s'_1>0$. Then $\ol s'_2=\frac{1}{2a}, \ol s'_1 = n-\frac{1}{2a}$, and 
$SC(\vec s') = n-\frac{1}{2a}+\frac{1}{4a}$. Note that this entails $n> \ol s'_2 = \frac{1}{2a}$.
In this case
\begin{align*}
\CPoA(\GG,r) &= \frac{SC(\vec s^*)}{SC(\vec s')} = \frac{n-\frac{1}{ar}+\frac{1}{ar^2}}{n-\frac{1}{2a}+\frac{1}{4a}} \\
& \leq \frac{ \frac{1}{2a}-\frac{1}{ar}+\frac{1}{ar^2}}{ \frac{1}{2a}-\frac{1}{2a}+\frac{1}{4a}} \tag{by our bound on $n$}\\
& = 4a(\frac{1}{2a}-\frac{1}{ar}+\frac{1}{ar^2}) = 2 - \frac{4}{r} + \frac{4}{r^2} < 2.
\end{align*}
Thus in the second case we only get a constant price of anarchy.

For tightness, it is sufficient to look at $\GG_P(a,n)$ for any $a,n$ s.t. $an=\frac{1}{r-1}$. Then the inequality we had in the first case becomes an equality.
\end{proof}

\begin{proposition}
\label{th:upper_RSG_le2}
For any Pigou instance $\GG_P(a_2,n)$, and any $r\in [1,2]$, we have $\CPoA(\GG,r) \leq  \frac{4}{4r-r^2}$, and this bound is tight.
\end{proposition}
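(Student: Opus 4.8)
The plan is to follow the template of Proposition~\ref{th:upper_RSG_ge2}, adjusting for the fact that here $r\le 2$, so the equilibrium of $\GG^*(r)$ puts \emph{more} mass on resource~$2$ than the optimum rather than less (if $a_2=0$ the bound is trivial, so assume $a_2>0$ and write $a=a_2$). In $\GG^*(r)$ the costs are $c^r_1(t)=1$ and $c^r_2(t)=rat$, so the unique equilibrium $\vec s^*$ carries mass $\min\{n,\tfrac1{ra}\}$ on resource~$2$ and the rest on resource~$1$; the optimum $OPT(\GG)$, being the equilibrium of $\GG^*(2)$ (it minimizes $\phi^2$, which equals $SC(\GG,\cdot)$ by Lemma~\ref{lemma:affine_optimal}), carries mass $\min\{n,\tfrac1{2a}\}$ on resource~$2$. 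Since $\tfrac1{2a}\le\tfrac1{ra}$ for $r\le 2$, split into three regimes: (A) $n\ge\tfrac1{ra}$, both splits interior; (B) $\tfrac1{2a}\le n<\tfrac1{ra}$, the equilibrium routes all mass on resource~$2$ but the optimum is still interior; (C) $n<\tfrac1{2a}$, both route all mass on resource~$2$ (one checks resource~$2$ is the cheaper resource there and the optimal marginal-cost condition is slack), so the ratio is exactly $1\le\tfrac{4}{4r-r^2}$.

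In regime (A), a direct computation gives $SC(\GG,\vec s^*)=n-\tfrac1{ra}+\tfrac1{r^2a}$ and $SC(\GG,OPT)=n-\tfrac1{4a}$. Substituting $x=\tfrac1{na}$ (so that the constraint becomes $x\le r$) and using the identity $\tfrac14-\tfrac1r+\tfrac1{r^2}=\tfrac{(r-2)^2}{4r^2}$, the ratio becomes $R(x)=1+\dfrac{(r-2)^2/(4r^2)}{1/x-1/4}$, which is nondecreasing in $x$ on $(0,r]$ and therefore maximized at $x=r$, where $R(r)=\tfrac{4}{r(4-r)}=\tfrac{4}{4r-r^2}$. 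In regime (B), $SC(\GG,\vec s^*)=an^2$ while $SC(\GG,OPT)=n-\tfrac1{4a}$, and the same substitution (now $x\in(r,2]$) collapses the ratio to $\tfrac{4}{4x-x^2}$; since $4x-x^2$ is increasing on $(0,2)$, this is decreasing in $x$ on $(r,2]$, with supremum $\tfrac{4}{4r-r^2}$ as $x\to r^+$ (attained at the regime-(A) boundary point $n=\tfrac1{ra}$). Combining the three regimes yields $\CPoA(\GG,r)\le\tfrac{4}{4r-r^2}$.

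For tightness, take any $\GG_P(a,n)$ with $an=\tfrac1r$ (for instance $\GG_P(1,1/r)$): then $x=\tfrac1{na}=r$, we sit exactly at the regime-(A) boundary, and the ratio equals $\tfrac{4}{4r-r^2}$ (recovering $\tfrac43$ at $r=1$ and $1$ at $r=2$). The only substantive point beyond routine algebra is organizing the case split and verifying that the worst instance falls precisely at the threshold $x=r$ separating regimes (A) and (B); this rests on the two monotonicity facts used above --- the nonnegativity of $\tfrac14-\tfrac1r+\tfrac1{r^2}=\tfrac{(r-2)^2}{4r^2}$ on $[1,2]$ and the concavity of $x\mapsto 4x-x^2$ --- together with a quick check that regime (C) really is the fully-congested-resource-$2$ regime.
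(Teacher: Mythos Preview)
Your proof is correct and follows essentially the same route as the paper's: the same three-case split according to whether the equilibrium and the optimum each put positive mass on resource~$1$, the same computation of the ratio in each case, and the same tightness instance $an=1/r$. The only cosmetic differences are your use of $x=1/(na)$ rather than the paper's $x=na$, a reordering of the cases, and your making the monotonicity in the interior case explicit via the identity $\tfrac14-\tfrac1r+\tfrac1{r^2}=\tfrac{(r-2)^2}{4r^2}$.
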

\begin{proof}
We denote the equilibrium and the optimum by $\vec s^*,\vec s'$, respectively.
For $r\leq 2$ we always have $\ol s^*_1\leq \ol s'_1$, by a symmetric argument to the one above.
We consider three cases.

Case~I: If $\ol s'_1=0$ then $\ol s^*_1=0$ as well. Then $\vec s^*=\vec s'$ and $\CPoA(\GG,r)=1$. 

Case~II: Suppose that $\ol s^*_1>0$ and $\ol s'_1>0$. This is similar to the same case in the previous proof, except we use the fact that $n> \ol s^*_2 = \frac{1}{ar}$. Thus
\begin{align*}
\CPoA(\GG,r) &= \frac{SC(\vec s^*)}{SC(\vec s')} = \frac{n-\frac{1}{ar}+\frac{1}{ar^2}}{n-\frac{1}{4a}} \\
& < \frac{ \frac{1}{ar}-\frac{1}{ar}+\frac{1}{ar^2}}{ \frac{1}{ar}-\frac{1}{4a}} \tag{by our bound on $n$}\\
& = \frac{ 1}{ar^2( \frac{1}{ar}-\frac{1}{4a})} = \frac{ 1}{r-\frac{r^2}{4}} = \frac{4}{4r-r^2}.
\end{align*}

Case~III: Suppose that $\ol s^*_1=0,\ol s'_1>0$. Then $SC(\vec s^*) = n^2a$ and $SC(\vec s') = n-\frac{1}{4a}$ as above. Thus $n = \ol s^*_2 \leq \frac{1}{ar}$. Also
\begin{align*}
\CPoA(\GG,r) & = \frac{n^2a}{n-\frac{1}{4a}} =  4a\frac{n^2a}{4an-1} = 4\frac{(an)^2}{4an-1} \\
& = 4\frac{x^2}{4x-1}  \tag{for $x=an$}\\
&\leq 4\frac{\frac{1}{r^2}}{4/r-1} = \frac{4}{4r-r^2}
\end{align*}
To see why the inequality holds, note the following:
Note that $x=an \geq \frac{1}{2}$, and the function is increasing in $x$ in this range (attains a minimum at $x=\frac12$). Thus we can replace $x$ by its upper bound $\frac{1}{r}$ 

For tightness we can take any $\GG_P(a,n)$, as long as $an = \frac{1}{r}$. Then the inequality in Case~III becomes an equality.
\end{proof}

\begin{rlemma}{lemma:ordered_r}
Let $\vec s,\vec s'$ be any two states in
affine NRG $\GG$ with three uncertainty types, 
and consider $r_3 \geq r_2 \geq r_1$. If $\phi^{r_3}(\vec s) \geq \phi^{r_3}(\vec s')$ and $\phi^{r_2}(\vec s) \leq \phi^{r_2}(\vec s')$, then $\phi^{r_1}(\vec s) \leq \phi^{r_1}(\vec s')$ as well. 
\end{rlemma}

\begin{proof}
Intuitively, we show that every from one state to another induces a cutoff point $r^*$ over types, such that either all agents with $r_i>r^*$ gain and the others lose, or vice versa. 

Given the two states $\vec s,\vec s'$, define the function $z(r) = \phi^r(\vec s) - \phi^r(\vec s')$. Observe that 
$$z(r)=\phi^{r}(\vec s) - \phi^{r}(\vec s') = \frac{r}{2}\sum_e a_e ((\ol s_e)^2-(\ol s'_e)^2) + \sum_e b_e (\ol s_e-\ol s'_e) = r Z_1 + Z_2$$
for some constants $Z_1,Z_2$. Thus $z(r)$ is monotone in $r$ (either non-increasing or non-decreasing). 

By the premise of the lemma, $z(r_3)> 0,z(r_2)\leq 0$, thus $z(r)$ is strictly decreasing. We conclude that $z(r_1)\leq 0$, which completes the proof. 
\end{proof}

\begin{rtheorem}{th:diverse_max}
Let $\GG$ be a symmetric affine RSG, $\vec r$ be an uncertainty vector. Then $\CPoA(\GG,\vec r) \leq \alpha_j \cdot \alpha_k$.
\end{rtheorem}
\begin{proof}[Completion of the proof]
Case~II: $r_k\leq 2$. By a symmetric proof, we get that $\CPoA(\GG,\vec r) \leq \CPoA(\GG,r_j)= \alpha_j$. 

Case~III:  $r_j < 2 <r_k$. We repeat a similar argument to Case~II, only using $r_k$ instead of $2$, to get that $\phi^{r_k}(\vec s^*) \leq \phi^{r_k}(\vec s^*_j)$. Thus by Lemma~\ref{lemma:affine_optimal},

$$SC(\vec s^*) \leq \phi^{r_k}(\vec s^*) \leq \phi^{r_k}(\vec s^*_j) \leq \frac{r_k}{2} SC(s^*_j) \leq  \frac{r_k}{2}\alpha_j SC(OPT),$$

which completes the proof. Also note that $\alpha_j\leq 2$ so in any case $\vec s^*$ is at most as bad as $r_k\cdot SC(OPT)$. 
\end{proof}

\begin{rtheorem}{th:diverse_small}
Let $\GG$ be an affine RSG. Suppose that $\vec r$ is composed of two types, $r_k>r_j$. Then $\CPoA(\GG,\vec r)\leq \frac{r_j}{2} + \frac{n_k}{n}O(r_k)$.
\end{rtheorem}
\begin{proof}
Let $\vec s^*$ be an equilibrium of $\GG^*(\vec r)$. Note that in the load state $\os^*$, every edge load is composed of two parts $\ol s^*_e = s^*_{j,e} + s^*_{k,e}$. 
Let $E_i = \{e\in E : s^*_{i,e}>0\}$.
Denote by $N_j,N_k$ the actual sets of all type~$j$ and type~$k$ agents.
 We can sum the costs of the two agent types independently, so that $SC(\GG,\vec s^*) = \sum_{e\in E_k}  s^*_{k,e} c_e(\ol s^*_e) + \sum_{e\in E_j}  s^*_{j,e} c_e(\ol s^*_e)$. Let $OPT$ be an optimal state for $\GG$.

Suppose we turn all the type~$k$ agents to type~$j$ agents, and check if there are any moves from the state $\vec s^*$. Note that there can be no moves from $E_j$ to other edges (since then there are type~$j$ agents in $\vec s^*$ with an improvement move). Thus the ``new'' type~$j$ agent can only hurt the existing ones. Formally,  if we continue until convergence to $\vec s^{*j}$ (the equilibrium of $\GG^*(r_j)$) then all agents in $N_j$ have a cost in $\vec s^{*j}$ that is at least as high as in $\vec s^*$:
$$\sum_{e\in E_j} s^*_{j,e} c_e(\ol s^*_e) \leq \sum_{e\in E_j} \ol s^{*j}_e c_e(\ol s^{*j}_e) \leq SC(\GG,\vec s^{*j}) \leq \frac{r_j}{2}SC(OPT).$$

Similarly, 
$$\sum_{e\in E_k}  s^*_{k,e} c_e(\ol s^*_e) \leq \sum_{e\in E_k} s^*_{k,e} c_e(\ol s^{*k}_e).$$
In $\vec s^{*k}$, all agents have the same experienced cost $c_f(r_k \vec s^{*k}) = F$. Thus $SC(\GG,\vec s^{*k}) \leq SC(\GG^*(r_k),\vec s^{*k}) = nF$.

In particular, the total experienced cost of $N_k$ in $\vec s^{*k}$ is $n_k F$ (and this is higher than the actual cost). We have that
$$ nF  = SC(\GG^*(r_k),\vec s^{*k}) \leq \frac43 SC(\GG^*(r_k),OPT) \leq \frac{4 r_k}{3} SC(\GG,OPT),$$
thus
\begin{align*}
\sum_{e\in E_k} s^*_{k,e} c_e(\ol s^*_e)&\leq \sum_{e\in E_k} s^*_{k,e} c_e(\ol s^{*k}_e) \leq n_k F  = \frac{n_k}{n} nF\leq   \frac{n_k}{n} \frac{4 r_k}{3} SC(OPT) &\Rightarrow\\
SC(\vec s^*) &\leq \frac{r_j}{2}SC(OPT) + \frac{n_k}{n}  \frac{4 r_k}{3} SC(OPT) = ( \frac{r_j}{2} + \frac{n_k}{n}  O(r_k)) SC(OPT).
\end{align*}
This shows that $\CPoA(\vec GG,\vec r)\leq  \frac{r_j}{2} + \frac{n_k}{n}  O(r_k)$.
\end{proof}

\section{Regret minimization}
\label{apx:regret}

\begin{lemma}
Any convex function $z(t)$ is $r$-convex for any $r\geq 1$. 
\end{lemma}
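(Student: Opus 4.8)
The plan is to unwind the definition of $r$-convexity and show that it is an immediate consequence of two elementary facts about a convex cost function: its derivative is non-decreasing, and (being a cost function, as assumed in the Preliminaries) non-negative. Recall the definition: $z$ is $r$-convex if $z'(t/r)/r \leq r\cdot z'(r\cdot t)$ for all $t\geq 0$. So the whole argument reduces to comparing $z'$ at the two points $t/r$ and $rt$ and then bookkeeping the factor $1/r$ versus $r$.

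First I would observe that for $r\geq 1$ and $t\geq 0$ we have the chain $t/r \leq t \leq rt$. Since $z$ is convex, $z'$ is non-decreasing, and therefore $z'(t/r)\leq z'(rt)$; dividing by $r>0$ gives $z'(t/r)/r \leq z'(rt)/r$. Second, since $z$ is a cost function it is non-decreasing, hence $z'(rt)\geq 0$; combined with $r\geq 1$ (so that $1/r \leq r$) this yields $z'(rt)/r \leq r\cdot z'(rt)$. Chaining the two inequalities gives $z'(t/r)/r \leq z'(rt)/r \leq r\cdot z'(rt)$, which is exactly the inequality defining $r$-convexity.

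There is essentially no technical obstacle here; the only point worth flagging is that plain convexity is \emph{not} sufficient (for instance $z(t)=-t$ has $z'(t/r)/r = -1/r > -r = r\,z'(rt)$ when $r>1$), so the argument genuinely uses the standing assumption that cost functions are non-decreasing in order to conclude $z'\geq 0$. Thus the intended reading of the statement is that every convex (non-decreasing) cost function is $r$-convex for all $r\geq 1$, and the proof is the two-line monotonicity-plus-sign computation above.
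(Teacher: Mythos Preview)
Your argument is correct and is essentially the same as the paper's: both use that convexity makes $z'$ non-decreasing so $z'(t/r)\leq z'(rt)$, together with $z'\geq 0$ and $r\geq 1$ to handle the $1/r$ versus $r$ factors. You are in fact more careful than the paper, which writes the chain $z'(t/r)/r \leq z'(t/r) \leq z'(tr) \leq r\,z'(tr)$ without explicitly flagging that the first and last inequalities rely on $z'\geq 0$; your counterexample $z(t)=-t$ nicely confirms that this non-decreasingness assumption (implicit in the paper's standing hypotheses on cost functions) is genuinely needed.
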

\begin{proof}
For any convex function, $z'(t)$ is a non-decreasing function. Thus
$z'(t/r)/r \leq z'(t/r) \leq z'(tr) \leq r z'(tr).$
\end{proof}
%

\begin{lemma}
Any polynomial function $z(t)$ is $r$-convex for any $r\geq 1$. 
\end{lemma}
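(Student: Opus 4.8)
The plan is to reduce the statement to the case of a single monomial and then verify that case by a one-line computation. First I would pin down the setting: the relevant domain is $t\geq 0$ (edge loads are non-negative), and by a ``polynomial cost function'' I take one with non-negative coefficients, $z(t)=\sum_{k=0}^{d}a_k t^k$ with $a_k\geq 0$. This is the standard polynomial class in the congestion-games literature, and it is exactly the form under which $z$ is non-negative and non-decreasing, as required of cost functions here.

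Next I would observe that $r$-convexity is closed under non-negative linear combinations. For fixed $t$ and $r$ the defining inequality $z'(t/r)/r\leq r\,z'(rt)$ is linear in the map $z\mapsto z'$, so if $z_1$ and $z_2$ are $r$-convex then so is $\lambda_1 z_1+\lambda_2 z_2$ for any $\lambda_1,\lambda_2\geq 0$ (add the two inequalities after scaling). Hence it suffices to check $r$-convexity for each monomial $m_k(t)=t^k$, $k\geq 0$.

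For $m_0$ the derivative is $0$ and the inequality reads $0\leq 0$. For $k\geq 1$ we have $m_k'(t)=k t^{k-1}$, so the required inequality is
\[ \tfrac{1}{r}\,k\,(t/r)^{k-1}\;=\;k\,t^{k-1}r^{-k}\;\leq\;k\,t^{k-1}r^{k}\;=\;r\cdot k\,(rt)^{k-1}, \]
which for $t\geq 0$ and $k\geq 1$ is equivalent to $r^{-k}\leq r^{k}$, i.e.\ $r^{2k}\geq 1$; this holds because $r\geq 1$. Combining with the closure property from the previous step yields $r$-convexity of every polynomial with non-negative coefficients, proving the lemma.

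I do not expect any genuine obstacle; the only point to watch is the direction of the monomial inequality, and it comes out right precisely because $r\geq 1$ places the ``contracted'' argument $t/r$ (where $m_k'$ is smaller) on the small side and the ``dilated'' argument $rt$ on the large side, reinforced by the explicit prefactors $1/r$ and $r$. If instead one wanted the statement for arbitrary monotone polynomials without the non-negative-coefficient assumption, more care would be needed since $z'$ need not be monotone; but that case is already covered by the preceding lemma whenever $z$ is convex, and the non-negative-coefficient case is the one used in the applications.
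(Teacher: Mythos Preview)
Your proof is correct and follows essentially the same route as the paper: a termwise (monomial-by-monomial) verification that $r^{-k}\le r^{k}$ when $r\ge 1$. The paper does it in one sum rather than via your explicit ``closure under non-negative combinations then check monomials'' decomposition, but the underlying computation is identical; you are also right to make the non-negative-coefficient assumption explicit, since the paper's termwise inequality silently relies on it as well.
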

\begin{proof}
We write $z(t) = \sum_{j=1}^J a_j t^{b_j}$, where $b_j\geq 0$ for all $j$. Then $z'(t) = \sum_{j=1}^J  j a_j t^{b_j-1}$.
\begin{align*}
z'(t/r)/r &= \frac{1}{r}\sum_{j=1}^J  j a_j (t/r)^{b_j-1} = r^{-1}\sum_{j=1}^J  j a_j t^{b_j-1} r^{1-b_j}\\
&= \sum_{j=1}^J  j a_j t^{b_j-1} r^{-b_j} \leq \sum_{j=1}^J  j a_j t^{b_j-1} r^{b_j}  = r\sum_{j=1}^J  j a_j t^{b_j-1} r^{b_j-1} = z'(rt)r 
\end{align*}
\end{proof}
%

\begin{rproposition}{th:WCR_potential} Consider an RSG $\GG$, and suppose that all cost functions are $r$-convex. Then $MR(\vec s)$ is a weak potential function of $\GG^{**}(r)$. 
\end{rproposition}
That is with  WCR moves the game is \emph{weakly acyclic}-  there is always some WCR move that reduces $MR(\vec s)$.

\begin{proof}
Let $y\in \argmin_{e\in M}WCR(e,\os)$, i.e. a  resource where the WCR is minimum. Let $X=\{e\in M: WCR(e,\os)=MR(\vec s), \ol s_e>0\}$, i.e.,  the set of resources in the support of $\os$ of which WCR is maximum. 

If $y\in X$ then we are done, as no agent can improve her WCR
utility. Otherwise, we have that $MR(\vec s) > WCR(y,\os)$. We will
show that there is a rational move to a state $\vec s^+$ (that is, the
utility of all involved agents strictly improves in $\vec s^+$), that
reduces $MR(\vec s^+)<MR(\vec s)$.

Let $w$ be the resource with the lowest minimal cost $c_w( \ol s_w/r)$, and let $w'$ be the resource with the second lowest minimal cost (ordered arbitrarily if there is a tie).  Note that for any $e\neq w$, $WCR(e,\os) = c_e(r \ol s_e) - c_w(\ol s_w/r)$, and $WCR(w,\os) = c_w(r \ol s_w) - c_{w'}(\ol s_{w'}/r)$.

Intuitively, moving some agents from $e\in X$ to $y$ reduces $WCR(e,\vec s)$, but may increase $WCR(e',\vec s)$ for some other $e'\in X$, and thus increase $MR(\vec s)$. 
Hence we divide into three cases: (a) $w\notin X$; (b) $w\in X$ but $w'\notin X$; and (c) $w,w'\in X$.

In case (a), take a mass of $\eps$ of all $e\in X$ and move it to $y\notin X$. This decreases $c_e(r \ol s_e)$ for all $e\neq y$ (strictly decreases for all $e\in X$), and does not decrease $c_w(\ol s_w/r)$, since $\ol s_w$ either increases or remains unchanged. Thus $WCR(e,\vec s^+)\leq WCR(e,\vec s)$ for all $e\in M\setminus\{y\}$, with a strict inequality for $e\in X$. The only resource where WCR possibly increases is $y$.  Since $WCR(y,\vec s)<MR(\vec s)$, then by continuity there is a sufficiently small $\eps$ s.t. $WCR(y,\vec s^+) \leq MR(\vec s^+) < MR(\vec s)$. 

In case (b), take a mass of $\eps$ from every $e\in X\setminus \{w\}$, and move it to $w'$. As in case (a), the WCR of all resources $e\in X\setminus \{w\}$ strictly decreases. As for $w$, we have 
\begin{align*}
WCR(w,\vec s^+)& = c_w(r \ol s^+_w)-c_{w'}(\ol s^+_{w'}/r) = c_w(r \ol s_w)-c_{w'}((\ol s_{w'}+\eps(|X|-1))/r) \\
&< c_w(r \ol s_w)-c_{w'}(\ol s_{w'}/r) = WCR(w,\vec s).
\end{align*}

So once again for a sufficiently small $\eps$, $MR(\vec s^+)<MR(\vec s)$. 

Case (c)  is the most complicated case. We denote the derivative of the cost function $c_e(t)$ at point $t$ by $c'_e(t)$.  Since $t$ may itself be a function of $\ol s_e$, we define $\hat c_e(t(\ol s_e)) = \frac{\partial c_e(t( \ol s_e))}{\partial \ol s_e}$. 
Note that for a constant $\alpha$, $\hat c_e(\alpha \ol s_e) = \alpha c'_e(\alpha \ol s_e)$.

Since cost functions have a bounded derivative in the relevant range, then for some small $\eps$ we have 
$$c_e(\alpha (\ol s_e+\eps)) \cong  c_e(\alpha \ol s_e) + \eps \hat c_e(\alpha \ol s_e) =  c_e(\alpha \ol s_e) + \eps \alpha  c'_e(\alpha \ol s_e).$$

\rmr{can be more precise by placing lower and upper bounds on $c_e(\alpha (\ol s_e+\eps)) $ with $c'_e(\alpha \ol s_e)$ and $c'_e(\alpha (\ol s_e+\eps))$ (use convexity)}

By convexity, $c'_e(\ol s_e/r)\leq c'_e(r \ol s_e)$, and since $r\geq 1$, we have 
$$\hat c_e(r \ol s_e)  = rc'_e(r \ol s_e)\geq \frac{1}{r}c'(\ol s_e /r) = \hat c_e(\ol s_e/r)$$
for all $e$ (with strict inequality for $r>1$), and in particular for $w$ and $w'$. 

Case (c.1): Suppose that $\hat c_{w'}(\ol s_{w'}/r) <  \hat c_{w}(r \ol s_{w})$ and $\hat c_{w}(\ol s_{w}/r)< \hat c_e(r \ol s_e)$ for all $e\in X\setminus \{w\}$, then we can still take $\eps$ of all resources in $X$ and move it to $y\notin X$ (as in case~(a)). We get that in the new state $\vec s^+$, for some sufficiently small $\eps$:
\begin{align*}
WCR(w,\vec s^+) &= c_w(r \ol s^+_w) - c_{w'}(\ol s^+_{w'}/r) \cong (c_w(r \ol s_w)-\eps r c'_w(r \ol s_w))- (c_{w'}(\ol s_{w'}/r)-\eps\frac{1}{r} c'_{w'}(\ol s_{w'}/r) ) \\
& = WCR(w,\vec s) +  \eps (\hat c_{w'}(\ol s_{w'}/r) -  \hat c_w(r \ol s_w)) < WCR(w,\vec s).
\end{align*}
For any $e\in X\setminus\{w\}$,
\begin{align*}
WCR(e,\vec s^+) &= c_e(r \ol s^+_e) - c_{w}(\ol s^+_{w}/r) \cong (c_e(r \ol s_e)-\eps r c'_e(r \ol s_e))- (c_{w}(\ol s_{w}/r)-\eps\frac{1}{r} c'_{w}(\ol s_{w}/r) ) \\
& = WCR(e,\vec s) +  \eps (\hat c_{w}(\ol s_{w}/r) -  \hat c_e(r \ol s_e)) < WCR(e,\vec s).
\end{align*}
So as in case (a) we have $MR(\vec s^+)<MR(\vec s)$.

Case (c.2): Suppose that 
 some of the inequalities for $X$ are violated. For Every $e\in X$ s.t.  $\hat c_{w}(\ol s_{w}/r)\geq \hat c_e(r \ol s_e)$ define $\delta_e$ strictly between $\frac{\hat c_{w}(\ol s_{w}/r)}{\hat c_e(r \ol s_e)}$ 
and $\frac{\hat c_{w}(\ol s_{w}r)}{\hat c_e( \ol s_e/r)}$ 
 For $e\in X$ where the inequality was not violated, set $\delta_e=1$. 

 We move a mass of $\eps$ from $w$ to $y$, and a mass of $\delta_e\eps>\eps$ from all $e\in X\setminus\{w\}$  (including $w'$) also to $y$. Thus we get: 
\begin{align*}
WCR(w,\os^+) &= c_w(r \ol s^+_w) - c_{w'}(\ol s^+_{w'}/r) \cong (c_w(r \ol s_w)-\eps t \hat c_w(r \ol s_w))- (c_{w'}(\ol s_{w'}/r)-\delta_{w'}\eps \frac{1}{r}\hat c_{w'}(\ol s_{w'}/r) ) \\
& = WCR(w,\os) +  \eps (\frac{1}{r}\delta_{w'} \hat c_{w'}(\ol s_{w'}/r) -  r \hat c_w( \ol s_w r)) \\
& < WCR(w,\os) +  \eps (\frac{1}{r}\hat c_{w}(\ol s_{w} r) - r \hat c_w( \ol s_w r) ) < WCR(w,\os),
\end{align*}

whereas for all $e\in X\setminus\{w\}$ for which $\delta_e >1$ (possibly including $w'$), 
\begin{align*}
WCR(e,\os^+) &= c_{e}(r \ol s^+_{e}) - c_{w}(\ol s^+_{w}/r) \cong (c_{e}(r \ol s_{e})-\delta_e\eps r \hat c_{e}(r \ol s_{e}))- (c_{w}(\ol s_{w}/r)-\eps \frac{1}{r}\hat c_{w}(\ol s_{w}/r) ) \\
& = WCR(e,\os) +  \eps ( \frac{1}{r} \hat c_{w}(\ol s_{w}/r) -  r\delta_e \hat c_{e}(r \ol s_{e})) \\
& = WCR(e,\os) +  \eps ( \frac{1}{r} \hat c_{w}(\ol s_{w}/r) -  r\delta_e \hat c_{e}(r \ol s_{e})) \\
& \leq  WCR(e,\os) +  \eps ( \frac{1}{r}\hat c_{w}(\ol s_{w}/r) -  r \hat c_{w}(r \ol s_{w})) < WCR(e,\os).
\end{align*}
For $e\in X$ where $\delta_e=1$,
\begin{align*}
WCR(e,\os^+) &= c_{e}(r \ol s^+_{e}) - c_{w}(\ol s^+_{w}/r) \cong (c_{e}(r \ol s_{e})-\delta_e\eps r \hat c_{e}(r \ol s_{e}))- (c_{w}(\ol s_{w}/r)-\eps\frac{1}{r} \hat c_{w}(\ol s_{w}/r) ) \\
& = WCR(e,\os) +  \eps (\frac{1}{r} \hat c_{w}(\ol s_{w}/r) -  r \hat c_{e}(r \ol s_{e}))< WCR(e,\os).
\end{align*}

%
%

Thus as in the previous cases, the WCR cost goes down for all $X$, which means $MR(\vec s^+)<MR(\vec s)$.
\end{proof}

\begin{rproposition}{th:upper_WCR_ge2}
Consider any Pigou instance $\GG_P(a_2,n)$. 
\begin{enumerate}
	\item For any $r \in [1,2+\sqrt{3}]$, we have $\RPoA(\GG_P,r) \leq \frac{16}{8(r+1/r) - (1+1/r)^2}$, and this bound is tight.
\item For any $r \geq 2+\sqrt{3}$, we have $\RPoA(\GG_P,r) \leq \frac{(r+1/r)^2}{8(r+1/r)-16)} = \frac{1}{8}r + o(r)$, and this bound is tight.
\end{enumerate}
\end{rproposition}
\begin{proof}
We denote $a=a_2$. We denote the optimal state by $\vec s'$.

The game $\GG^*(r)$ has a unique equilibrium, If all agents use $e_2$, then $SC(\vec s^*)=n^2a$. This occurs if and only if $ran-1 = c^{**}_2(0,n) \leq c^{**}_1(0,n) = an/r - 1$, i.e. iff $n \leq \frac{2}{a(r+1/r)}$. 

Otherwise, we have $c^{**}_2(\os^*) = c^{**}_1(\os^*)$.
\begin{align*}
&ra \ol s^*_2 - 1 =  1-a \ol s^*_2/r & \iff\\
&(r+1/r)a \ol s^*_2 = 2& \iff \\
& \ol s^*_2 = \frac{2}{a(r+1/r)},\ol s^*_1 = n-\frac{2}{a(r+1/r)}.
\end{align*}
Thus the social cost in equilibrium is 
$$SC(\vec s^*) = (n-\frac{2}{a(r+1/r)}) + (\frac{2}{a(r+1/r)})^2 a = n-\frac{2}{a(r+1/r)} + \frac{4}{a(r+1/r)^2}.$$

For the optimal outcome we already know that $\ol s'_2=\frac{1}{2a},SC(\vec s') = 1-\frac{1}{4a}$ if $n\geq \frac{1}{2a}$ and $\ol s'_2=n,SC(\vec s')=n^2 a$ otherwise. 

Note that the cutoff value of $r$ is when $\frac{1}{2a} = \frac{2}{a(r+1/r)}$. If $r$ is lower than the threshold value, then $\ol s^*_1 \leq \ol s'_1$, and otherwise $\ol s^*_1 \geq \ol s'_1$. By solving we find that the threshold value is $2+\sqrt{3}$ (recall that the threshold for WCC was $2$, and that for this value we got the optimal allocation in equilibrium). Clearly, if $r=2+\sqrt{3}$ then $\vec s^*=\vec s'$ and thus $\RPoA(\GG_P,r) = 1$. 

Now, suppose that $r\geq 2+\sqrt{3}$, and thus $\ol s^*_1 \geq  \ol s'_1$. 
Suppose first that $\ol s'_1=0$, i.e. $\frac{2}{a(r+1/r)} \leq n\leq \frac{1}{2a}$. Denote $x=\frac{1}{na}$, then $\frac{r+1/r}{2}\geq  x \geq 2$. 
\begin{align*}
\RPoA(\GG_P,r) & = \frac{n-\frac{2}{a(r+1/r)} + \frac{4}{a(r+1/r)^2}}{n^2 a}  = \frac{1}{an} - \frac{2}{a^2n^2(r+1/r)} + \frac{4}{a^2 n^2 (r+1/r)^2}\\
& = x +  (\frac{4}{(r+1/r)^2} - \frac{2}{r+1/r})x^2 = x+ \frac{4-2(r+1/r)}{(r+1/r)^2}x^2 = x+Ax^2
\end{align*}
for $A=\frac{4-2(r+1/r)}{(r+1/r)^2}$ (note that $A$ is negative).
The expression above is maximized for $x=-\frac{1}{2A}$, which gives us 
$$\RPoA(\GG_P,r) \leq -\frac{1}{4A} = \frac{(r+1/r)^2}{8(r+1/r)-16}.$$

The other case is when $\ol s'_1>0$, and thus $n\geq \frac{1}{2a}$. Note that 
$-\frac{2}{(r+1/r)} + \frac{4}{(r+1/r)^2} \geq -1/4.$
Thus
\begin{align*}
\RPoA(\GG_P,r) & = \frac{n-\frac{2}{a(r+1/r)} + \frac{4}{a(r+1/r)^2}}{n-\frac{1}{4a}} 
\leq \frac{ \frac{1}{2a}-\frac{2}{a(r+1/r)} + \frac{4}{a(r+1/r)^2}}{ \frac{1}{2a}-\frac{1}{4a}} \tag{use lower bound on $n$}\\
& = 4a (\frac{1}{2a}-\frac{2}{a(r+1/r)} + \frac{4}{a(r+1/r)^2}) 
 = 2 - \frac{8}{r+1/r)} + \frac{16}{(r+1/r)^2}.
\end{align*}

It can be verified that this bound is never larger than our previous bound $\frac{(r+1/r)^2}{8(r+1/r)-16}$ (and it is also bounded by a constant).

Thus $\RPoA(\GG_P,r)$ is an increasing function in $r$ (for $r\geq 2+\sqrt{3}$), that is equal to $r/8 + o(r)$. For comparison recall that under WCC costs we had an increase at approximate rate $r/4$.  

For tightness, it is sufficient to look at $\GG_P(a,n)$ for any $a,n$ s.t. $an=\frac{(r+1/r)^2}{2-(r+1/r)}$. Then the inequality we had in the first case becomes an equality.

\medskip
Next, suppose that $r\leq 2+\sqrt{3}$, and thus $\ol s^*_1 \leq \ol  s'_1$.
The first case is when $\ol s^*_1=0, \ol s'_1 = \frac{1}{2a}$. Then $n\leq \frac{2}{a(r+1/r)}$, and
$$
\RPoA(\GG_P,r)  = \frac{n^2 a}{n-\frac{1}{4a}} = \frac{4a^2n^2}{4na-1} = \frac{4x^2}{4x-1}.
$$
The above expression is increasing in $x=na$ in the range $x\leq \frac{2}{r+1/r}$. Thus
$$\RPoA(\GG_P,r) \leq \frac{4 (\frac{2}{r+1/r})^2}{4\frac{2}{r+1/r} - 1} = \frac{16}{8(r+1/r) - (r+1/r)^2}.$$

If $\ol s^*_1>0$, then $\ol s^*_2 = \frac{2}{a(r+1/r)}$ and $na\geq \frac{2}{r+1/r}$. Thus
\begin{align*}
\RPoA(\GG_P,r) &= \frac{n-\frac{2}{a(r+1/r)} + \frac{4}{a(r+1/r)^2}}{n-\frac{1}{4a}} 
 \leq \frac{\frac{2}{a(r+1/r)}-\frac{2}{a(r+1/r)} + \frac{4}{a(r+1/r)^2}}{\frac{2}{a(r+1/r)}-\frac{1}{4a}} \\
& = \frac{4}{a(r+1/r)^2(\frac{2}{a(r+1/r)}-\frac{1}{4a})} = \frac{16}{8(r+1/r) - (r+1/r)^2}.
\end{align*}
That is, we get the same bound in either case. For tightness, it is sufficient to set $a,n$ s.t. $an = \frac{2}{r+1/r}$. 
\end{proof}

\section{Confidence intervals of the Poisson distribution}
\label{apx:poisson}
Suppose that $t$ is a sample from a Poisson distribution with an unknown mean $\ol s'_e$ (the true load). Then for a required confidence level $\alpha$, it holds that $\ol s'_e$ is within the confidence interval $[x_1,x_2]$, where $x_1,x_2$ are the solutions to the equation $(t-x)^2 / t = z$ ($z$ is a constant that depends on $\alpha$). 

Thus $x = t + c/2 \pm \sqrt{c} \sqrt{t+c/4}$. We argue that there is an $r$ s.t. $x_1\cong t/r$ and $x_2\cong t\cdot r$.

 We have
$$x_1 x_2 = t^2 + t c - c(t+c/4) = t^2 - c^2/4 \cong t^2, $$
That is, approximation gets better as $t$ becomes larger.
Thus if we set $r = \sqrt{x_2/x_1}$,
$$(t r)^2 = t^2 \frac{x_2}{x_1} \cong x_1 x_2 \frac{x_2}{x_1} = (x_2)^2,$$
 and 
$$(t/r )^2 = t^2 \frac{x_1}{x_2} \cong x_1 x_2 \frac{x_1}{x_2} = (x_1)^2.$$
So we get that $x_1 \cong t/r, x_2 \cong t r$. 

To see that this is indeed a good approximation, here are the 95\% confidence intervals for various values of $t$:
$$\begin{array}{|c|c||c|c||c|c|}
t & r& x_1 & t/r & x_2 & t r \\
\hline
1	&5.6653111 &	0.1765	&0.1765128&	5.6649&	5.6653111 \\
2	&3.6464040	& 0.5485	&0.5484855&	7.293	&7.2928081\\
3	&2.9403558	& 1.0203	&1.0202846&	8.8212&	8.8210675\\
4	&2.5714977	& 1.5555	&1.5555137&	10.2859&	10.285990\\
5	&2.3411563	& 2.1357	&2.1356967&	11.7058&	11.705781\\
6	&2.1819154	& 2.7499	&2.7498774&	13.0916&	13.091492
\end{array}
$$

From the table we see that the ``correct'' value of $r$ for a given confidence level depends on the load $t$. However recall that people ignore this and consider the standard deviation (and thus $r$) as a fixed fraction from $t$, see Footnote~\ref{fn:KT}.

\end{document}